\numberwithin{equation}{section}
\numberwithin{figure}{section}
\theoremstyle{plain}
\newtheorem{thm}{\protect\theoremname}
  \theoremstyle{remark}
  \newtheorem{rem}[thm]{\protect\remarkname}
  \theoremstyle{plain}
  \newtheorem{cor}[thm]{\protect\corollaryname}
\renewcommand\[{\begin{equation}}
\renewcommand\]{\end{equation}} 
  \providecommand{\corollaryname}{Corollary}
  \providecommand{\remarkname}{Remark}
\providecommand{\theoremname}{Theorem}
\begin{document}

\title{A variational approach to modeling slow processes in stochastic dynamical
systems}

\author{Frank Noé and Feliks Nüske}
\begin{abstract}
The slow processes of metastable stochastic dynamical systems are
difficult to access by direct numerical simulation due the sampling
problem. Here, we suggest an approach for modeling the slow parts
of Markov processes by approximating the dominant eigenfunctions and
eigenvalues of the propagator. To this end, a variational principle
is derived that is based on the maximization of a Rayleigh coefficient.
It is shown that this Rayleigh coefficient can be estimated from statistical
observables that can be obtained from short distributed simulations
starting from different parts of state space. The approach forms a
basis for the development of adaptive and efficient computational
algorithms for simulating and analyzing metastable Markov processes
while avoiding the sampling problem. Since any stochastic process
with finite memory can be transformed into a Markov process, the approach
is applicable to a wide range of processes relevant for modeling complex
real-world phenomena.
\end{abstract}

\address{Department of Mathematics and Research Center \textsc{Matheon}\\
FU Berlin\\
Arnimallee 6, 14195 Berlin, Germany}

\email{frank.noe@fu-berlin.de, pycon@zedat.fu-berlin.de}

\maketitle

\section{Introduction}

In this article, we consider continuous-time Markov processes $\mathbf{z}_{t}\in\Omega$
living in a usually large state space $\Omega$ that is either continuous
or discrete. The process $\mathbf{z}_{t}$ is considered to be sufficiently
ergodic such that a unique stationary density (invariant measure)
$\mu$ exists. Independent of the details of the dynamics (such as
system size, potential, stochastic coupling, etc), there exists a
family of linear propagators $\mathcal{P}(\tau)$ which evolve the
probability density of states $\rho_{\tau}$ as:
\begin{equation}
\rho_{\tau}=\mathcal{P}(\tau)\:\rho_{0}\label{eq_propagator}
\end{equation}
Continuous-time Markov processes are useful models of real-world processes
in a variety of areas \cite{vanKampen}. Examples include macroscopic
phenomena including the evolution of financial and climate systems
\cite{McLeishKolkiewicz_DiffusionInFinance,Pelletier_Arxiv95_DiffusionClimate},
as well as microscopic dynamics such as the diffusion of cells in
liquids \cite{AnguigeSchmeiser_JMathBiol2009_CellDiffusion}, the
diffusion of biomolecules within cells \cite{SpaarHelms_BiophysJ06_Barstar},
the stochastic reaction dynamics of chemicals at surfaces \cite{RigerRogalReuter_PRL08_FirstPrinciplesKMC},
and the stochastic dynamics governing the structural dynamics of molecules
\cite{BestHummer_PNAS09_Diffusion}. Often, these dynamics are metastable,
i.e. they consist of slow processes between sets of state space that
have long lifetimes. In macromolecules, such slowly-exchanging sets
are called conformations, hence the union of the slowest dynamical
processes are there termed conformation dynamics \cite{SchuetteFischerHuisingaDeuflhard_JCompPhys151_146,Schuette_ICIAM07}. 

In practice, the slow dynamical processes are the ones which pose
the greatest difficulties to direct numerical simulation as they require
the longest simulation times. An extreme example is the atomistic
simulation of solvated biomolecules which would require the propagation
of a system with $10^{4}-10^{6}$ particles for $10^{7}-10^{10}$
time-steps, a task that is intractable or hardly tractable even with
special-purpose supercomputers \cite{Shaw_Science10_Anton}. However,
the slowest processes are also the ones that are the most interesting
in many systems. They often correspond to rare events that change
the global structure and/or the functional behavior of the system.
For example, in macromolecular systems, the slowest events often correspond
to functional conformational changes such as folding, binding or catalysis
\cite{NoeSchuetteReichWeikl_PNAS09_TPT,VoelzPande_JACS10_NTL9,BowmanVoelzPande_JACS11_FiveHelixBundle-TripletQuenching,HeldEtAl_BiophysJ10_AssociationTPT}.
Therefore, a method is sought that models the slow dynamical processes
of continuous-time Markov processes accurately and ideally in a way
that supports the efficient simulation of these processes without
the need to solve the full direct numerical simulation problem.

Especially in statistical physics, many theories and methods have
been proposed to model slow dynamical processes. Examples are rate
theories that describe the passage rate of a process across a surface
that separates metastable states \cite{Eyring_JCP35_TST,haenggi_RevModPhys62_251},
pathway-based theories and methods that describe the transition dynamics
of a system from a subset $A$ to a subset $B$ of state space \cite{EVandenEijnden_MMS04_Metastability,BolhuisChandlerDellagoGeissler_AnnuRevPhysChem02_TPS,MetznerSchuetteVandenEijnden_JCP06_TPT}
and network-based approaches that attempt to coarse-grain the high-dimensional
dynamics to a network of discrete jump events between sub-states or
landmarks \cite{Wales,WalesDoye_JCP119_12409}. These approaches usually
assume a separation of timescales between the slow and the fast processes
that results from vanishing smallness parameters (e.g. noise intensity,
temperature). In these cases, the mathematical analysis can be based
on large deviation estimates and variational principles\textbf{ \cite{FreidlinWentzell_Book_RandomPertubations,EVandenEijnden_MMS04_Metastability}}.

In this article, we consider a more general approach to describing
slow dynamical processes. When the operator $\mathcal{P}$ is compact
and self-adjoint, eq. (\ref{eq_propagator}) can be decomposed into
the propagator's spectral components 

\begin{eqnarray}
\rho_{\tau} & = & \mu+\sum_{i=2}^{m}a_{i}(\rho_{0})\lambda_{i}(\tau)l_{i}+\mathcal{P}_{\mathrm{fast}}(\tau)\:\rho_{0}\label{eq_propagator-decomposition}
\end{eqnarray}
where $\mbox{\ensuremath{\mu}},l_{2},l_{3},...$ are the propagator's
eigenfunctions and $\lambda_{i}(\tau)=\exp(-\kappa_{i}\tau)$ (sorted
in non-ascending order) are the propagator's real-valued eigenvalues
that decay exponentially in time with rates $\kappa_{i}$. $a_{i}(\rho_{0})$
are factors depending on the initial density $\rho_{0}$. We here
consider the situation that the eigenspaces of slow and fast processes
are orthogonal. For example, in the important case that the system
studied is a microscopic physical system in thermal equilibrium, such
that the process $\mathbf{z}_{t}$ is reversible with respect to the
invariant density $\mu$, the eigenspaces of slow and fast processes
are orthogonal for any choice of $m$. In such a case, $m$ is a model
parameter that can be chosen to distinguish the $m$ slow processes
of interest from the fast processes $\mathcal{P}_{\mathrm{fast}}(\tau)$
that are not of interest. For initial densities $\rho_{0}\in\mathrm{span}\{\mu,l_{2},...,l_{m}\}$,
the term for fast processes in Eq. (\ref{eq_propagator-decomposition})
vanishes and the propagation can be performed exactly using only the
dominant eigenfunctions. Note that Eq. (\ref{eq_propagator-decomposition})
represents the formulation of the dynamical process problem as a multi-scale
problem in time: $\mu$ represents the timescale $\infty$, $t_{2}=\kappa_{2}^{-1}=-\tau/\ln\lambda_{2}$
is the slowest dynamical process, etc. This formulation thus allows
essential characteristics of the dynamical process to be described
by treating its scales \emph{separately}. The relation between dominant
eigenvalues, exit times and rates, and metastable sets has been studied
by asymptotic expansions in certain smallness parameters as well as
by functional analytic means without any relation to smallness parameters
\cite{MaierStein_SIAP97_ExitDistribution,Deuflhard_LinAlgAppl_PCCA,Huisinga_AnnApplProbab04_PhaseTransitions,Bovier_JEMS04_Metastability1,Bovier_JEMS04_Metastability2}.
In particular, \cite{DaviesI,DaviesII} established fundamental relations
between the eigenvalues and -functions of $\mathcal{P}$ and metastable
sets.

The task is now to approximate the propagator's dominant eigenfunctions
$\mbox{\ensuremath{\mu}},l_{2},l_{3},...$ and eigenvalues $\lambda_{i}$.
In other disciplines, variational principles have been worked out
in order to approximate eigenfunctions and eigenvalues of known operators
such as a quantum-mechanical Hamiltonian \cite{SzaboOstlund}. In
contrast, for many complex dynamical systems, $\mathcal{P}(\tau)$
is either not known explicitly, or not available in a form that can
be transformed into Eq. (\ref{eq_propagator-decomposition}). Instead,
$\mathcal{P}(\tau)$ is given implicitly through stochastic realizations
of the process $\mathbf{z}_{t}$. Therefore, a variational principle
is sought that allows eigenfunctions $\mbox{\ensuremath{\mu}},l_{2},l_{3},...$
and eigenvalues $\lambda_{i}$ to be approximated through statistical
observables of $\mathbf{z}_{t}$. Such a variational principle will
be formulated in the present paper.

This problem has extensively been studied for specific functional
forms of the eigenfunctions $l_{i}$. When $l_{i}$ are approximated
via characteristic functions on a given set decomposition of state
space, i.e. $l_{i}\mu^{-1}\in\mathrm{span}\{\mathbf{1}_{S_{1}},...,\mathbf{1}_{S_{n}}\}$
with $S_{j}\subset\Omega$, the problem of finding the best approximation
to eigenfunctions and eigenvalues is solved by a Markov model or Markov
state model (MSM) \cite{SchuetteFischerHuisingaDeuflhard_JCompPhys151_146,Bovier_JEMS04_Metastability1,Deuflhard_LinAlgAppl_PCCA,SarichNoeSchuette_MMS09_MSMerror,NoeEtAl_PNAS11_Fingerprints,HuisingaSchmidt_06}.
In another version of MSMs, $l_{i}$ are approximated by committor
functions between a few pre-defined {}``cores'' that form a non-complete
subset of state space \cite{BucheteHummer_JPCB08,DjurdjevacSarichSchuette_MMS10_EigenvalueError,SchuetteEtAl_JCP11_Milestoning}.
Basis functions that form a partition of unity are used in \cite{Weber_PhD}.
Markov state models have been recently used a lot to model molecular
dynamics processes, especially in conjunction with large amounts of
distributedly simulated trajectories \cite{SwopePiteraSuits_JPCB108_6571,SinghalPande_JCP123_204909,ChoderaEtAl_JCP07,NoeHorenkeSchutteSmith_JCP07_Metastability,BucheteHummer_JPCB08,NoeSchuetteReichWeikl_PNAS09_TPT}.
Applications include conformational rearrangements and folding of
peptides, proteins and RNA \cite{ChoderaEtAl_JCP07,PanRoux_JCP08_MarkovModelPath,Bowman_JCP09_Villin,NoeSchuetteReichWeikl_PNAS09_TPT,VoelzPande_JACS10_NTL9,NoeEtAl_PNAS11_Fingerprints}.
In this application area, MSMs have had significant impact because
they can be estimated from relatively short simulation trajectories
and yet allow the system behavior to be predicted at long timescales. 

Despite this success, significant challenges remain. For example,
in most current applications, the discretization of state space is
done heuristically \emph{via} a Voronoi partition of state space obtained
from clustering available data points. The ability to construct a
state space discretization adaptively would tremendously aid the construction
of MSMs that are precise while avoiding the use of too many states.
Such an adaptive discretization must be guided by an objective function
that somehow measures the error made by the model. A bound to the
MSM discretization error has been derived in \cite{SarichNoeSchuette_MMS09_MSMerror}.
However, this error is not suitable to design a constructive discretization
approach as its evaluation requires knowledge of the exact eigenfunctions.
The variational principle derived in this paper only uses statistical
observables and is henceforth the basis for such a constructive approach
to adaptive discretization. Furthermore, it is shown that existing
Markov modeling approaches can be understood as a constrained optimal
solution of the variational principle \emph{via} a Ritz or Roothan-Hall
method with different choices of basis sets. Based upon this formulation,
further development of basis sets appropriate for describing complex
molecular conformation dynamics can be made.

The article is organized as follows: In Sec. \ref{sec_Theory} a variational
principle is formulated where the dominant eigenfunctions of stochastic
dynamical systems are approximated by maximizing a Rayleigh coefficient,
which - in the limit of the exact solution - is identical to the true
eigenvalues. This Rayleigh coefficient is linked to the computation
of correlation functions that can be evaluated without explicit knowledge
of the propagator $\mathcal{P}$. Sec. \ref{sec_Modeling} makes considerations
which types of functions may be practically useful to construct an
approximation to Eq. (\ref{eq_propagator-decomposition}) in complex
systems. Sec. \ref{sec_Example-1} shows numerical experiments on
a diffusion process in a one-dimensional double-well potential. Sec.
\ref{sec_Conclusion} concludes this study and makes suggestions for
the next steps\global\long\def\braketmuinv#1#2{\left\langle #1\mid#2\right\rangle _{\mu^{-1}}}

\section{Variational principle for conformation dynamics}

\label{sec_Theory}

\subsection{Basics}

Let $\Omega$ be a state space, and let us use $\mathbf{x,y}$ to
denote points in this state space. We consider a Markov process $\mathbf{z}_{t}$
on $\Omega$ which is stationary and ergodic with respect to its unique
stationary (invariant) distribution $\mu(\mathbf{x})\equiv p(\mathbf{z}_{t}=\mathbf{x})\:\forall t$.
The dynamics of the process $\mathbf{z}_{t}$ are characterized by
the transition density
\[
p(\mathbf{x},\mathbf{y};\,\tau)=p(\mathbf{z}_{t+\tau}=\mathbf{y}\mid\mathbf{z}_{t}=\mathbf{x}),
\]

which we assume to be independent of the time $t$. The correlation
density, i.e., the probability density of finding the process at points
$\mathbf{x}$ and $\mathbf{y}$ at a time spacing of $\tau$, is then
defined by
\[
C(\mathbf{x},\mathbf{y};\,\tau)=\mu(\mathbf{x})\, p(\mathbf{x},\mathbf{y};\,\tau)=p(\mathbf{z}_{t+\tau}=\mathbf{y},\:\mathbf{z}_{t}=\mathbf{x}).
\]

We further assume $\mathbf{z}_{t}$ to be reversible with respect
to its stationary distribution, i.e.:
\begin{eqnarray}
\mu(\mathbf{x})\, p(\mathbf{x},\mathbf{y};\,\tau) & = & \mu(\mathbf{y})\, p(\mathbf{y},\mathbf{x};\,\tau)\label{eq_reversibility1}\\
C(\mathbf{x},\mathbf{y};\,\tau) & = & C(\mathbf{y},\mathbf{x};\,\tau).\label{eq_reversibility2}
\end{eqnarray}

Reversibility is not strictly necessary but tremendously simplifies
the forthcoming expressions and their interpretation \cite{SarichNoeSchuette_MMS09_MSMerror}.
In physical simulations, reversibility is the consequence of the simulation
system being in thermal equilibrium with its environment, i.e. the
dynamics in the system is purely a consequence of thermal fluctuations
and there are no external driving forces.

If, at time $t=0$, the process is distributed according to a probability
distribution $\rho_{0}$, the corresponding distribution at time $\tau$
is given by:
\[
\rho_{\tau}(\mathbf{y})=\int_{\Omega}d\mathbf{x}\:\rho_{0}(\mathbf{x})\: p(\mathbf{x},\mathbf{y};\,\tau)=:\mathcal{P}(\tau)\rho_{0}.
\]

The time evolution of probability densities can be seen as the action
of a linear operator $\mathcal{P}(\tau)$, called the propagator of
the process. This is a well-defined operator on the Hilbert space
$L_{\mu^{-1}}^{2}(\Omega)$ of functions which are square-integrable
with respect to the weight function $\mu^{-1}$. The scalar-product
on this space is given by 
\[
\braketmuinv uv=\int_{\Omega}d\mathbf{x}u(\mathbf{x})v(\mathbf{x})\mu^{-1}(\mathbf{x}).
\]
If we assume the transition density to be a smooth and bounded function
of $\mathbf{x}$ and $\mathbf{y}$, the propagator can be shown to
be bounded, with operator norm less or equal to one. Since the stationarity
of $\mu$ implies $\mathcal{P}(\tau)\mu=\mu$, we even have $\|\mathcal{P}(\tau)\|=1$.
Reversibility allows us to show that the propagator is self-adjoint
and compact. Furthermore, using the definition of the transition density,
we can show that $\mathcal{P}(\tau)$ satisfies a Chapman-Kolmogorov
equation: For times $\tau_{1},\tau_{2}\geq0$, we have 
\[
\mathcal{P}(\tau_{1}+\tau_{2})=\mathcal{P}(\tau_{1})\mathcal{P}(\tau_{2}).
\]

\subsection{Spectral decomposition}

It follows from the above arguments that $\mathcal{P}(\tau)$ possesses
a sequence of real eigenvalues $\lambda_{i}(\tau)$, with $|\lambda_{i}(\tau)|\leq1$
and $|\lambda_{i}(\tau)|\rightarrow0$. Each of these eigenvalues
corresponds to an eigenfunction $l_{i}\in L_{\mu^{-1}}^{2}(\Omega)$.
The functions $l_{i}$ form an orthonormal basis of the Hilbert space
$L_{\mu^{-1}}^{2}(\Omega)$. Clearly, $\lambda_{1}(\tau)=1$ is an
eigenvalue with eigenfunction $l_{1}=\mu$. In many applications,
we can assume that $\lambda_{1}(\tau)$ is non-degenerate and $-1$
is not an eigenvalue. Additionally, there usually is a number $m$
of positive eigenvalues 
\[
1=\lambda_{1}(\tau)>\lambda_{2}(\tau)>\ldots>\lambda_{m}(\tau),
\]
which are separated from the remaining spectrum. Because of the Chapman-Kolmogorov
equation, each eigenvalue $\lambda_{i}(\tau)$ decays exponentially
in time, i.e. we have 
\[
\lambda_{i}(\tau)=\exp(-\kappa_{i}\tau)
\]
for some rate $\kappa_{i}\geq0$. Clearly, $\kappa_{1}=0$, $\kappa_{2},\ldots,\kappa_{m}$
are close to zero, and all remaining rates are significantly larger
than zero. If we now expand a function $u\in L_{\mu^{-1}}^{2}(\Omega)$
in terms of the functions $l_{i}$, i.e.

\[
u=\sum_{i=1}^{\infty}\braketmuinv u{l_{i}}l_{i},
\]

we can decompose the action of the operator $\mathcal{P}(\tau)$ into
its action on each of the basis functions:

\[
\begin{array}{ccc}
\mathcal{P}(\tau)u & = & \sum_{i=1}^{\infty}\braketmuinv u{l_{i}}\mathcal{P}(\tau)l_{i}\\
 & = & \sum_{i=1}^{\infty}\lambda_{i}(\tau)\braketmuinv u{l_{i}}l_{i}\\
 & = & \sum_{i=1}^{\infty}\exp(-\kappa_{i}\tau)\braketmuinv u{l_{i}}l_{i}.
\end{array}
\]

For lag times $\tau\gg\frac{1}{\kappa_{m+1}}$, all except the first
$m$ terms in the above sum have become very small \cite{SarichNoeSchuette_MMS09_MSMerror},
and to a good approximation we have

\[
\mathcal{P}(\tau)u\approx\sum_{i=1}^{m}\exp(-\kappa_{i}\tau)\braketmuinv u{l_{i}}l_{i}.
\]

Knowledge of the dominant eigenfunctions and eigenvalues is therefore
most helpful to the understanding of the process.
\begin{rem}
Instead of the propagator $\mathcal{P}(\tau)$, one can also consider
the transfer operator $\mathcal{T}(\tau)$, defined for functions
$u\in L_{\mu}^{2}(\Omega)$ by:

\[
\mathcal{T}(\tau)u(\mathbf{y})=\frac{1}{\mu(\mathbf{y})}\int_{\Omega}d\mathbf{x}p(\mathbf{x},\mathbf{y};\tau)\mu(\mathbf{x})u(\mathbf{x}).
\]

Using the unitary multiplication operator $\mathcal{M}:L_{\mu}^{2}(\Omega)\mapsto L_{\mu^{-1}}^{2}(\Omega)$,
defined by
\end{rem}
\[
\mathcal{M}u(\mathbf{x})=\mu(\mathbf{x})u(\mathbf{x}),
\]

we have

\[
\mathcal{P}(\tau)=\mathcal{MT}(\tau)\mathcal{M}^{-1},
\]

and consequently, the transfer operator inherits all of the above
properties from $\mathcal{P}(\tau)$. In particular, there is a sequence
of eigenfunctions 

\[
r_{i}=\mu^{-1}l_{i}
\]
of $\mathcal{T}(\tau)$, corresponding to the same eigenvalues $\lambda_{i}(\tau)$,
which are normalized w.r.t. to the scalar-product weighted with $\mu$
instead of $\mu^{-1}$. Especially, we have $r_{1}=\mu^{-1}\mu$=1
. The two operators can be treated as equivalent, and all of the above
could have been formulated in terms of $\mathcal{T}(\tau)$ as well.

\subsection{Rayleigh variational principle}

In nontrivial dynamical systems neither the correlation densities
$p(\mathbf{x},\mathbf{y};\,\tau)$ and $C(\mathbf{x},\mathbf{y};\,\tau)$
nor the eigenvalues $\lambda_{i}(\tau)$ and eigenfunctions $l_{i}$
are analytically available. This section provides a variational principle
based on which these quantities can be estimated from simulation data
generated by the dynamical process $\mathbf{z}_{t}$. For this, the
formalism introduced above is used to formulate the Rayleigh variational
principle used in quantum mechanics \cite{SzaboOstlund} for Markov
processes.

Let $f$ be a real-valued function of state, $f=f(\mathbf{x}):\Omega\rightarrow\mathbb{R}$.
Its autocorrelation with respect to the stochastic process $\mathbf{z}_{t}$
is given by:
\begin{eqnarray}
\mathrm{acf}(f;\,\tau) & = & \mathbb{E}[f(\mathbf{z}_{0})\: f(\mathbf{z}_{\tau})]=\int_{\mathbf{x}}\int_{\mathbf{y}}d\mathbf{x}\: d\mathbf{y}\: f(\mathbf{x})\: C(\mathbf{x},\mathbf{y};\,\tau)\: f(\mathbf{y})=\braketmuinv{\mathcal{P}(\tau)\mu f}{\mu f}.\label{eq_acf}
\end{eqnarray}
In the Dirac notation often used in physical literature, integrals
such as the one above may be abbreviated by $\mathbb{E}[f(\mathbf{x}_{0})\: f(\mathbf{x}_{\tau})]=\langle\mu f\mid\mathcal{\mathcal{P}}(\tau)\mid\mu f\rangle$. 
\begin{thm}
\emph{\label{thm_eigenvalue-exact}The autocorrelation function of
a weighted eigenfunction $r_{k}=\mu^{-1}l_{k}$ is its eigenvalue
$\lambda_{k}(\tau)$:} 
\[
\mathrm{acf}(r_{k};\,\tau)=\mathbb{E}\left[r_{k}(\mathbf{z}_{0})\: r_{k}(\mathbf{z}_{\tau})\right]=\lambda_{k}(\tau).
\]
\end{thm}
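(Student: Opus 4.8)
The plan is to reduce the statement to the closed-form expression for the autocorrelation function already recorded in Eq.~(\ref{eq_acf}), and then to invoke only two properties of the eigenfunctions $l_k$: that they solve the eigenvalue equation $\mathcal{P}(\tau)l_k=\lambda_k(\tau)l_k$, and that they are orthonormal in $L_{\mu^{-1}}^{2}(\Omega)$. Once these are in place, the whole computation collapses to a single application of linearity of the scalar product.

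First I would specialize Eq.~(\ref{eq_acf}) to $f=r_k$, which immediately yields
\[
\mathrm{acf}(r_k;\,\tau)=\braketmuinv{\mathcal{P}(\tau)\mu r_k}{\mu r_k}.
\]
The key observation in the next step is that the weighting by $\mu$ appearing inside this bracket exactly undoes the $\mu^{-1}$ built into the definition of $r_k$: since $r_k=\mu^{-1}l_k$, we have $\mu r_k=l_k$, so the bracket simplifies to $\braketmuinv{\mathcal{P}(\tau)l_k}{l_k}$. This is precisely the Rayleigh quotient of the propagator evaluated on its own eigenfunction.

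From here I would apply the eigenvalue equation $\mathcal{P}(\tau)l_k=\lambda_k(\tau)l_k$ in the first slot of the scalar product and pull the scalar $\lambda_k(\tau)$ out of the bilinear form, leaving $\lambda_k(\tau)\,\braketmuinv{l_k}{l_k}$. Because the family $\{l_i\}$ forms an orthonormal basis of $L_{\mu^{-1}}^{2}(\Omega)$, the remaining factor $\braketmuinv{l_k}{l_k}$ equals $1$, and the asserted identity $\mathrm{acf}(r_k;\,\tau)=\lambda_k(\tau)$ follows.

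I do not expect a genuine obstacle in this argument, since the content of the theorem is essentially that the particular choice $f=r_k$ converts the general autocorrelation formula into the Rayleigh quotient of $l_k$, which is trivial. The only points that require a moment's care are the bookkeeping of the measure weights — verifying that $\mu r_k$ cancels exactly to $l_k$ so that the $\mu^{-1}$-weighting of the scalar product matches the normalization convention chosen for the $l_i$ — and the (already granted) fact that $\lambda_k(\tau)$ is real, so that no complex conjugation issue arises when it is extracted from the bracket.
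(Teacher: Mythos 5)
Your argument is exactly the paper's proof: substitute $f=r_{k}=\mu^{-1}l_{k}$ into Eq.~(\ref{eq_acf}), note $\mu r_{k}=l_{k}$ so the autocorrelation becomes $\braketmuinv{\mathcal{P}(\tau)l_{k}}{l_{k}}$, then apply the eigenvalue relation and the normalization $\braketmuinv{l_{k}}{l_{k}}=1$. Correct and essentially identical in every step.
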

\begin{proof}
Using (\ref{eq_acf}) with $f=\mu^{-1}l_{k}$, it directly follows
that:

\[
\begin{array}{ccc}
\mathbf{\mathnormal{\mathrm{acf}(r_{k};\tau)}} & = & \braketmuinv{\mathcal{P}(\tau)l_{k}}{l_{k}}\\
 & = & \lambda_{k}(\tau)\braketmuinv{l_{k}}{l_{k}}\\
 & = & \lambda_{k}(\tau).
\end{array}
\]
\end{proof}
\begin{thm}
\emph{\label{thm_eigenvalue-bound-2}Let $\hat{l}_{2}$ be an approximate
model for the second eigenfunction, which is normalized and orthogonal
to the true first eigenfunction:
\begin{eqnarray}
\langle\hat{l}_{2},\mu\rangle_{\mu^{-1}} & = & 0\label{eq:-2}\\
\langle\hat{l}_{2},\hat{l}_{2}\rangle_{\mu^{-1}} & = & 1.\label{eq_normalization-second-efun}
\end{eqnarray}
}
\end{thm}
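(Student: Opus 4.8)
The plan is to run the standard Rayleigh--Ritz argument in the eigenbasis, exploiting the fact that orthogonality to $l_1=\mu$ removes the dominant eigenvalue from the spectral expansion and thereby leaves $\lambda_2(\tau)$ as the effective maximum. I expect the statement to conclude with the variational bound
\[
\mathrm{acf}(\hat{r}_2;\tau)=\braketmuinv{\mathcal{P}(\tau)\hat{l}_2}{\hat{l}_2}\le\lambda_2(\tau),
\]
with equality precisely when $\hat{l}_2=l_2$ (up to sign).

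First I would rewrite the autocorrelation in operator form. Applying \eqref{eq_acf} with $f=\hat{r}_2=\mu^{-1}\hat{l}_2$, so that $\mu f=\hat{l}_2$, gives $\mathrm{acf}(\hat{r}_2;\tau)=\braketmuinv{\mathcal{P}(\tau)\hat{l}_2}{\hat{l}_2}$. Next I would expand the trial function in the orthonormal eigenbasis $\{l_i\}$ of $L_{\mu^{-1}}^2(\Omega)$, writing $\hat{l}_2=\sum_{i\ge1}c_i l_i$ with $c_i=\braketmuinv{\hat{l}_2}{l_i}$. Hypothesis \eqref{eq:-2} forces $c_1=\braketmuinv{\hat{l}_2}{\mu}=0$, so the sum effectively starts at $i=2$, while the normalization \eqref{eq_normalization-second-efun} becomes $\sum_{i\ge2}c_i^2=1$ by Parseval.

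The heart of the argument is to insert this expansion into the quadratic form. Using self-adjointness together with the spectral relation $\mathcal{P}(\tau)l_i=\lambda_i(\tau)l_i$ and orthonormality, I obtain $\mathrm{acf}(\hat{r}_2;\tau)=\sum_{i\ge2}\lambda_i(\tau)c_i^2$. Since the eigenvalues are sorted in non-ascending order, $\lambda_i(\tau)\le\lambda_2(\tau)$ for every $i\ge2$, and bounding each term gives $\sum_{i\ge2}\lambda_i(\tau)c_i^2\le\lambda_2(\tau)\sum_{i\ge2}c_i^2=\lambda_2(\tau)$, which is the claim. For the equality case I would invoke the spectral gap $\lambda_2(\tau)>\lambda_3(\tau)$: equality in the termwise estimate requires $c_i=0$ whenever $\lambda_i(\tau)<\lambda_2(\tau)$, so only the $i=2$ coefficient survives and $\hat{l}_2=\pm l_2$.

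The only point needing care --- rather than a genuine obstacle --- is justifying the termwise action of $\mathcal{P}(\tau)$ on the infinite expansion and its interchange with the inner product. This is legitimate because $\mathcal{P}(\tau)$ is bounded (indeed compact and self-adjoint, as established in the Basics subsection) and $\hat{l}_2\in L_{\mu^{-1}}^2(\Omega)$, so the series converges in norm and the continuity of the operator and of the scalar product permit the interchange. Everything else is the routine Parseval bookkeeping already used in the spectral decomposition above.
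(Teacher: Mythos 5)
Your proposal is correct and follows essentially the same route as the paper's proof: expand $\hat{l}_2$ in the eigenbasis $\{l_i\}$, kill the $i=1$ coefficient via the orthogonality condition, apply the spectral relation and orthonormality to get $\sum_{i\ge2}a_i^2\lambda_i(\tau)$, and bound this by $\lambda_2(\tau)$ using the eigenvalue ordering and Parseval. The extra remarks on the equality case and on justifying the termwise interchange are sound additions but not part of the paper's argument.
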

Then we find for $\hat{r}_{2}=\mu^{-1}\hat{l}_{2}$:\emph{
\[
\mathrm{acf}(\hat{r}_{2};\,\tau)=\mathbb{E}\left[\hat{r}_{2}(\mathbf{z}_{0})\:\hat{r}_{2}(\mathbf{z}_{\tau})\right]\le\lambda_{2}(\tau).
\]
}
\begin{proof}
The proof is an application of the Rayleigh variational method to
the operator $\mathcal{P}(\tau)$. If $\hat{l}_{2}$ is written in
terms of the basis of eigenfunctions $l_{i}$:
\[
\hat{l}_{2}=\sum_{i=2}^{\infty}a_{i}l_{i},
\]
where $a_{1}=0$ because of the orthogonality condition, we find:

\[
\begin{array}{ccc}
\mathrm{acf}(\hat{r_{2};}\tau) & = & \braketmuinv{\mathcal{P}(\tau)\hat{l_{2}}}{\hat{l_{2}}}\\
 & = & \sum_{i,j=2}^{\infty}a_{i}a_{j}\braketmuinv{\mathcal{P}(\tau)l_{i}}{l_{j}}\\
 & = & \sum_{i,j=2}^{\infty}a_{i}a_{j}\lambda_{i}(\tau)\braketmuinv{l_{i}}{l_{j}}\\
 & = & \sum_{i=2}^{\infty}a_{i}^{2}\lambda_{i}(\tau)\\
 & \leq & \lambda_{2}(\tau)\sum_{i=2}^{\infty}a_{i}^{2}\\
 & = & \lambda_{2}(\tau).
\end{array}
\]

The pre-last estimate is due to the ordering of the eigenvalues, and
the last equality results from the normalization condition \ref{eq_normalization-second-efun}
and Parseval's identity.\end{proof}
\begin{cor}
\emph{\label{thm_eigenvalue-bound-k}Similarly, let $\hat{l}_{k}$
be an approximate model for the $k$'th eigenfunction, with the normalization
and orthogonality constraints:
\begin{eqnarray}
\langle\hat{l}_{k},l_{i}\rangle_{\mu^{-1}} & = & 0,\:\:\:\:\:\forall i<k\label{eq:-14}\\
\langle\hat{l}_{k},\hat{l}_{k}\rangle_{\mu^{-1}} & = & 1,\nonumber 
\end{eqnarray}
}
\end{cor}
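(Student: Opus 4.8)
The plan is to prove this exactly as Theorem \ref{thm_eigenvalue-bound-2}, of which it is the natural generalization from index $2$ to an arbitrary index $k$; the bound to be established is the Rayleigh--Ritz inequality
\[
\mathrm{acf}(\hat{r}_{k};\,\tau)=\braketmuinv{\mathcal{P}(\tau)\hat{l}_{k}}{\hat{l}_{k}}\le\lambda_{k}(\tau).
\]
First I would expand the trial function in the orthonormal eigenbasis $\{l_{i}\}$ of $L_{\mu^{-1}}^{2}(\Omega)$, writing $\hat{l}_{k}=\sum_{i=1}^{\infty}a_{i}l_{i}$ with coefficients $a_{i}=\braketmuinv{\hat{l}_{k}}{l_{i}}$.

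The key step is that the orthogonality constraints now annihilate the \emph{entire} leading block of coefficients rather than only the first: since $\braketmuinv{\hat{l}_{k}}{l_{i}}=0$ for every $i<k$, we have $a_{1}=\dots=a_{k-1}=0$ and the expansion effectively starts at $i=k$. Substituting into (\ref{eq_acf}) and using the self-adjointness of $\mathcal{P}(\tau)$, the eigenvalue relation $\mathcal{P}(\tau)l_{i}=\lambda_{i}(\tau)l_{i}$, and the orthonormality $\braketmuinv{l_{i}}{l_{j}}=\delta_{ij}$ collapses the double sum to a single diagonal one,
\[
\mathrm{acf}(\hat{r}_{k};\,\tau)=\sum_{i,j=k}^{\infty}a_{i}a_{j}\lambda_{i}(\tau)\braketmuinv{l_{i}}{l_{j}}=\sum_{i=k}^{\infty}a_{i}^{2}\lambda_{i}(\tau).
\]
To finish I would invoke the non-ascending ordering of the eigenvalues, giving $\lambda_{i}(\tau)\le\lambda_{k}(\tau)$ for all $i\ge k$, pull this bound out of the sum, and apply the normalization together with Parseval's identity, $\sum_{i=k}^{\infty}a_{i}^{2}=\braketmuinv{\hat{l}_{k}}{\hat{l}_{k}}=1$, to obtain $\mathrm{acf}(\hat{r}_{k};\,\tau)\le\lambda_{k}(\tau)$.

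I do not expect a genuine technical obstacle, since every step mirrors the proof of Theorem \ref{thm_eigenvalue-bound-2}; the points that deserve care are conceptual. The ordering estimate requires the eigenvalues to be sorted by \emph{value} in non-ascending order, so that all indices $\ge k$---including any negative eigenvalues further down the spectrum---sit below $\lambda_{k}(\tau)$; this is precisely the global ordering assumed in the spectral decomposition. More importantly for interpretation, the orthogonality here is imposed against the \emph{true} and generally unknown eigenfunctions $l_{1},\dots,l_{k-1}$, unlike Theorem \ref{thm_eigenvalue-bound-2}, where the single constraint was orthogonality to the known stationary density $\mu=l_{1}$. The inequality is therefore a clean theoretical statement, and converting it into a computable variational principle for $k>2$ will require replacing the true eigenfunctions by their approximations.
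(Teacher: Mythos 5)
Your proof is correct and is exactly the argument the paper intends: the paper omits the details and simply states that the proof is analogous to Theorem \ref{thm_eigenvalue-bound-2}, and your expansion in the eigenbasis with the first $k-1$ coefficients annihilated by the orthogonality constraints, followed by the ordering estimate and Parseval, is precisely that analogue. Your closing remarks on the ordering of possibly negative eigenvalues and on the practical issue that orthogonality is imposed against the unknown true eigenfunctions are accurate and consistent with the paper's own later remark on this point.
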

then\emph{
\[
\mathrm{acf}(\hat{r}_{k};\,\tau)=\mathbb{E}\left[\hat{r}_{k}(\mathbf{z}_{0})\:\hat{r}_{k}(\mathbf{z}_{\tau})\right]\le\lambda_{k}(\tau).
\]
}The proof is analogous to Theorem \ref{thm_eigenvalue-bound-2}.
\begin{rem}
Improved estimates than those of \ref{thm_eigenvalue-exact} to \ref{thm_eigenvalue-bound-k}
have been obtained by \cite{HuisingaSchmidt_06} for characteristic
functions. With a re-definition of the terminology, they can be directly
transferred to the case of mutually orthonormal basis functions. It
would be interesting to study the applicability of these results to
more general cases. However, obtaining such estimates is not the focus
of the present paper.
\end{rem}

\begin{rem}
The variational principle given by Theorems (\ref{thm_eigenvalue-exact})
to (\ref{thm_eigenvalue-bound-k}) is fulfilled for $\hat{l}_{k}$
with $k>2$ only if the $k-1$ dominant eigenfunctions are already
known.

In particular, the first eigenfunction, i.e. the stationary density
must be known. In practice, these eigenfunctions are approximated
via solving a variational principle. Nonetheless, some basic statements
can be made even if no eigenfunction is known exactly. For example,
it is trivial that when the estimated stationary density $\hat{\mu}$
is used in Theorem \ref{thm_eigenvalue-exact}, then the estimate
of the first eigenvalue is still always correctly 1:
\[
\mathrm{acf}(\hat{\mu}^{-1}\hat{\mu};\,\tau)=\mathrm{acf}(1;\,\tau)=1
\]
and from theorems \ref{thm_eigenvalue-exact} and \ref{thm_eigenvalue-bound-2}
it follows that any function $\hat{r}_{k}\neq\hat{\mu}$
\[
\mathrm{acf}(\hat{r}_{k};\,\tau)<1
\]
hence the eigenvalue 1 is simple and dominant also when estimating
eigenvalues from data. 
\end{rem}

\begin{rem}
An important insight at this point is that a variational principle
of conformation dynamics can be formulated in terms of correlation
functions. In contrast to quantum mechanics or other fields where
the variational principle has been successfully employed, no closed-form
expression of the operator $\mathcal{P}(\tau)$ is needed. The ability
to express the variational principle in terms of correlation functions
with respect to $\mathcal{P}(\tau)$ means that the eigenvalues to
be maximized can be directly estimated from simulation data. If statistically
sufficient realizations of $\mathbf{z}_{t}$ are available, then the
autocorrelation function can be estimated via:
\[
\mathrm{acf}(\hat{r}_{k};\,\tau)=\mathbb{E}(\hat{r}_{k}(\mathbf{z}_{0})\hat{r}_{k}(\mathbf{z}_{\tau}))\approx\frac{1}{N}\sum\hat{r}_{k}(\mathbf{z}_{0})\hat{r}_{k}(\mathbf{z}_{\tau}),
\]
where $N$ is the number of simulated time windows of length $\tau$.
We will try to use this in the application of the method.
\end{rem}

\subsection{Ritz method}

\label{sub_Ritz-method}The Ritz method is a systematic approach to
find the best possible approximation to the $m$ first eigenfunctions
of an operator simultaneously in terms of a linear combination of
orthonormal functions \cite{Ritz_JReineAngewMathe09_Variationsprobleme}.
Here the Ritz method is simply restated in terms of the present notation.
Let $\chi_{i}:\Omega\rightarrow\mathbb{R}$, $i\in\{1,...,m\}$ be
a set of $m$ orthonormal basis functions:
\[
\langle\chi_{i},\chi_{j}\rangle_{\mu^{-1}}=\delta_{ij},
\]
and let $\boldsymbol{\chi}$ denote the vector of these functions:
\[
\boldsymbol{\chi}(\mathbf{x})=[\chi_{1}(\mathbf{x}),...,\chi_{m}(\mathbf{x})]^{T}.
\]
We seek a coefficient matrix $\mathbf{B}\in\mathbb{R}^{m\times m}$,
\begin{equation}
\mathbf{B}=\left[\mathbf{b}_{1},...,\mathbf{b}_{m}\right]\label{eq_Ritz-coefficient-matrix}
\end{equation}
with the column vectors $\mathbf{b}_{i}=[b_{i1},...,b_{im}]^{T}$
that approximate the eigenfunctions of the propagator as:
\begin{eqnarray}
\hat{l}_{i}(\mathbf{x}) & = & \mathbf{b}_{i}^{T}\boldsymbol{\chi}(\mathbf{x})=\sum_{j=1}^{m}b_{ij}\chi_{j}(\mathbf{x}),\label{eq_Ritz-function-approximation}
\end{eqnarray}
with respect to the constraint that the functions $\hat{l}_{i}$ are
also normalized. It turns out that the solution $\mathbf{B}$ to the
eigenvalue equation 
\begin{equation}
\mathbf{H}\mathbf{B}=\mathbf{B}\boldsymbol{\hat{\Lambda}},\label{eq:ritz_matrix_eigenvalues}
\end{equation}
with individual eigenvalue/eigenvector pairs
\begin{equation}
\mathbf{H}\mathbf{b}_{i}=\mathbf{b}_{i}\hat{\lambda}_{i},\label{eq: ritz_individual_eigenvalues}
\end{equation}
and the density matrix $\mathbf{H}=[h_{ij}]$ defined by:
\begin{eqnarray}
h_{ij} & = & \int_{\mathbf{x}}\int_{\mathbf{y}}d\mathbf{x}\: d\mathbf{y}\mu^{-1}(\mathbf{x})\:\chi_{i}(\mathbf{x})\: C(\mathbf{x},\mathbf{y};\,\tau)\:\mu^{-1}(\mathbf{y})\chi_{j}(\mathbf{y})\label{eq_h_ij-integral}\\
 & = & \mathbb{E}[\mu^{-1}\chi_{i}(\mathbf{z}_{0})\:\mu^{-1}\chi_{j}(\mathbf{z}_{\tau})],\label{eq_h_ij-correlation}
\end{eqnarray}
yields the desired result. More precisely, the eigenvector $\mathbf{b}_{1}$
corresponding to the greatest eigenvalue $\hat{\lambda_{1}}$ from
\ref{eq: ritz_individual_eigenvalues} contains the coefficients of
the linear combination which maximizes the Rayleigh coefficient among
the functions $\chi_{i}$, and this maximum is given by $\hat{\lambda}_{1}.$
Consequently, $\hat{\lambda}_{1}$ should be as close as possible
to $\lambda_{1}=1$, and the function generated from $\mathbf{b}_{1}$
should model the stationary density $l_{1}$. But furthermore, the
remaining eigenvalues and eigenvectors generated from (\ref{eq:ritz_matrix_eigenvalues})
can be used as estimates of the other eigenvalues $\lambda_{2},\ldots,\lambda_{m}$:
\begin{cor}
The second estimated eigenvalue $\hat{\lambda}_{2}$ from (\ref{eq: ritz_individual_eigenvalues})
satisfies $\hat{\lambda}_{2}\leq\lambda_{2}$.\end{cor}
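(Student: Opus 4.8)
The plan is to recognize that the Ritz matrix $\mathbf{H}$ is precisely the matrix representation, in the orthonormal basis $\{\chi_i\}$, of the compression of the self-adjoint operator $\mathcal{P}(\tau)$ to the finite-dimensional subspace $V=\mathrm{span}\{\chi_1,\ldots,\chi_m\}$, and then to invoke the Courant--Fischer max-min characterization of eigenvalues on both $V$ and the full Hilbert space.

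First I would record the algebraic identity underlying $\mathbf{H}$. Generalizing the autocorrelation formula (\ref{eq_acf}) to a cross-correlation gives $\mathbb{E}[(\mu^{-1}\chi_i)(\mathbf{z}_0)\,(\mu^{-1}\chi_j)(\mathbf{z}_\tau)]=\langle\mathcal{P}(\tau)\chi_i,\chi_j\rangle_{\mu^{-1}}$, so by (\ref{eq_h_ij-correlation}) we have $h_{ij}=\langle\mathcal{P}(\tau)\chi_i,\chi_j\rangle_{\mu^{-1}}$. Since $\mathcal{P}(\tau)$ is self-adjoint, $\mathbf{H}$ is a real symmetric matrix, hence its eigenvalues are real and may be ordered $\hat\lambda_1\geq\hat\lambda_2\geq\cdots\geq\hat\lambda_m$. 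Next I would note that the map $\mathbf{c}=(c_1,\ldots,c_m)^T\mapsto u_{\mathbf{c}}:=\sum_i c_i\chi_i$ is an isometry from $\mathbb{R}^m$ (Euclidean) onto $V$ (with the $\mu^{-1}$ inner product), because the $\chi_i$ are orthonormal. A one-line computation then gives $\mathbf{c}^T\mathbf{H}\mathbf{c}=\langle\mathcal{P}(\tau)u_{\mathbf{c}},u_{\mathbf{c}}\rangle_{\mu^{-1}}$ and $\mathbf{c}^T\mathbf{c}=\langle u_{\mathbf{c}},u_{\mathbf{c}}\rangle_{\mu^{-1}}$, so the matrix Rayleigh quotient of $\mathbf{c}$ coincides with the operator Rayleigh quotient $R(u)=\langle\mathcal{P}(\tau)u,u\rangle_{\mu^{-1}}/\langle u,u\rangle_{\mu^{-1}}$ evaluated at $u=u_{\mathbf{c}}$.

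With this dictionary in place, I would apply the max-min theorem on both sides. On the matrix side it yields $\hat\lambda_2=\max_{\dim S=2}\min_{0\neq\mathbf{c}\in S}\mathbf{c}^T\mathbf{H}\mathbf{c}/\mathbf{c}^T\mathbf{c}$ over two-dimensional $S\subseteq\mathbb{R}^m$; transporting through the isometry turns this into $\hat\lambda_2=\max\min R(u)$, where the maximum runs over two-dimensional subspaces $W\subseteq V$. For the operator, since $\mathcal{P}(\tau)$ is compact and self-adjoint with $\|\mathcal{P}(\tau)\|=1$ and discrete spectrum accumulating only at $0$ (as established in Section \ref{sec_Theory}), the same characterization holds for its largest eigenvalues, giving $\lambda_2=\max\min R(u)$ over \emph{all} two-dimensional subspaces $W$ of $L^2_{\mu^{-1}}(\Omega)$. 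Since every two-dimensional subspace of $V$ is in particular a two-dimensional subspace of $L^2_{\mu^{-1}}(\Omega)$, the maximum defining $\hat\lambda_2$ is taken over a subset of the subspaces available for $\lambda_2$, whence $\hat\lambda_2\leq\lambda_2$. The identical argument yields $\hat\lambda_k\leq\lambda_k$ for every $k\leq m$.

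The main obstacle to anticipate is that one cannot simply reuse Theorem \ref{thm_eigenvalue-bound-2}: the Ritz function $\hat l_2=\mathbf{b}_2^T\boldsymbol{\chi}$ is orthogonal to the Ritz approximation $\hat l_1=\mathbf{b}_1^T\boldsymbol{\chi}$ of the stationary density, but \emph{not} to the true first eigenfunction $\mu$, so the single-function Rayleigh bound does not apply directly. The two-dimensional max-min formulation circumvents this by optimizing over an entire plane, which automatically contains a direction orthogonal to $\mu$ without requiring $\mu\in V$; the only point needing genuine care is to confirm that the max-min characterization is valid for the top eigenvalues of the merely compact (not finite-rank) operator $\mathcal{P}(\tau)$, which follows from the spectral properties already derived in Section \ref{sec_Theory}.
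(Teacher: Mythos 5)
Your proof is correct, but it reaches the conclusion through a different key lemma than the paper does. The paper's proof is deliberately elementary and self-contained: it takes the specific two-dimensional subspace $\mathrm{span}\{\hat{l}_{1},\hat{l}_{2}\}$, shows by explicit computation (orthogonality of the eigenvectors of $\mathbf{H}$, hence $\langle\mathcal{P}(\tau)\hat{l}_{1},\hat{l}_{2}\rangle_{\mu^{-1}}=0$) that every normalized element $x\hat{l}_{1}+y\hat{l}_{2}$ of that plane has Rayleigh coefficient $x^{2}\hat{\lambda}_{1}+y^{2}\hat{\lambda}_{2}\geq\hat{\lambda}_{2}$, picks the normalized element of the plane that is orthogonal to the true $l_{1}=\mu$ (which exists because orthogonality to $\mu$ is a single linear condition on a two-dimensional space), and applies Theorem \ref{thm_eigenvalue-bound-2} to that element. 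In effect the paper proves, by hand, exactly the $k=2$ instance of the Poincar\'e/Courant--Fischer inequality that you invoke wholesale. What your route buys is generality and brevity: the same two lines give $\hat{\lambda}_{k}\leq\lambda_{k}$ for all $k\leq m$ (a statement the paper does not record for the Ritz eigenvalues), and it makes transparent that the result is simply eigenvalue interlacing for the compression of a self-adjoint operator to a subspace. What it costs is an external ingredient: the max--min characterization of $\lambda_{2}$ for the compact operator $\mathcal{P}(\tau)$ is nowhere established in the paper --- only the single-vector bound of Theorem \ref{thm_eigenvalue-bound-2} is --- so you would have to either cite it or prove the direction you need, and the shortest proof of that direction (``every two-dimensional subspace contains a unit vector orthogonal to $\mu$, whose Rayleigh coefficient is then at most $\lambda_{2}$'') is precisely the argument the paper writes out. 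You correctly identified the one genuine trap, namely that $\hat{l}_{2}$ is orthogonal to $\hat{l}_{1}$ but not to $\mu$, so Theorem \ref{thm_eigenvalue-bound-2} cannot be applied to $\hat{l}_{2}$ directly; both your argument and the paper's resolve it by the same two-dimensional device.
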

\begin{proof}
First of all, note that $\braketmuinv{\hat{l}_{2}}{\hat{l}_{1}}=0$
by the orthogonality of the eigenvectors of the matrix $\mathbf{H}$.
For the same reason, we find that:

\begin{eqnarray}
\braketmuinv{\mathcal{P}(\tau)\hat{l}_{1}}{\hat{l}_{2}} & = & \sum_{i,j=1}^{m}b_{1i}b_{2j}\braketmuinv{\mathcal{P}(\tau)\chi_{i}}{\chi_{j}}\nonumber \\
 & = & \sum_{i,j=1}^{m}b_{1i}b_{2j}h_{ij}\nonumber \\
 & = & \hat{\lambda}_{2}\sum_{i=1}^{m}b_{1i}b_{2i}\nonumber \\
 & = & 0.\label{eq:rayleigh_coeff_l1_l2}
\end{eqnarray}

Now, let $\hat{l}=x\hat{l}_{1}+y\hat{l}_{2}$ be a linear combination
of the first two model eigenfunctions which is normalized such that

\begin{eqnarray}
1 & = & \braketmuinv{\hat{l}}{\hat{l}}\nonumber \\
 & = & x^{2}\braketmuinv{\hat{l}_{1}}{\hat{l}_{1}}+y^{2}\braketmuinv{\hat{l}_{2}}{\hat{l}_{2}}\nonumber \\
 & = & x^{2}+y^{2}.\label{eq:normalization_linear_comb_l1_l2}
\end{eqnarray}

Using (\ref{eq:rayleigh_coeff_l1_l2}) and (\ref{eq:normalization_linear_comb_l1_l2}),
computing the Rayleigh coefficient of $\hat{l}$ results in:

\begin{eqnarray}
\braketmuinv{\mathcal{P}(\tau)\hat{l}}{\hat{l}} & = & x^{2}\braketmuinv{\mathcal{P}(\tau)\hat{l}_{1}}{\hat{l}_{1}}+y^{2}\braketmuinv{\mathcal{P}(\tau)\hat{l}_{2}}{\hat{l}_{2}}\nonumber \\
 & = & x^{2}\hat{\lambda}_{1}+y^{2}\hat{\lambda}_{2}\nonumber \\
 & = & \hat{\lambda}_{1}-y^{2}(\hat{\lambda}_{1}-\hat{\lambda}_{2}).\label{eq:estimated_second_ev_bound_below}
\end{eqnarray}

which is bounded from below by $\hat{\lambda}_{2}$. Clearly, there
is a normalized linear combination $\hat{l}$ of $\hat{l}_{1}$ and
$\hat{l}_{2}$ which is orthogonal to $l_{1}$. By (\ref{eq:estimated_second_ev_bound_below})
and the variational principle, we conclude that:

\begin{eqnarray*}
\hat{\lambda}_{2} & \leq & \braketmuinv{\mathcal{P}(\tau)\hat{l}}{\hat{l}}\\
 & \leq & \lambda_{2}.
\end{eqnarray*}
\end{proof}
\begin{rem}
Due to the equality between Eq. (\ref{eq_h_ij-integral}) and (\ref{eq_h_ij-correlation})
the elements of the $\mathbf{H}$ matrix can be estimated as correlation
functions of a simulation of the process $\mathbf{z}_{t}$, as mentioned
above, provided that a sufficient approximation of $\mu$ is at hand.

\end{rem}

\subsection{Roothaan-Hall method}

\label{sub_Rothaan-Hall-method-1}The Roothaan-Hall method is a generalization
of the Ritz method used for solving the linear parameter optimization
problem for the case when the basis set is not orthogonal \cite{Roothaan_RevModPhys51_RoothaanHall,Hall_ProcRoySocLonA51_RoothaanHall}.
Let the matrix $\mathbf{S}\in\mathbb{R}^{m\times m}$ with elements
\[
S_{ij}=\langle\chi_{i},\chi_{j}\rangle_{\mu^{-1}}
\]

be the matrix of overlap integrals with the normalization conditions
$S_{ii}=1$. Note that $\mathbf{S}$ has full rank if and only if
all $\chi_{i}$ are pairwise linearly independent. The optimal solutions
$\mathbf{b}_{i}$ in the sense of Eqs (\ref{eq_Ritz-coefficient-matrix})-(\ref{eq_Ritz-function-approximation})
are found by the eigenvectors of the generalized eigenvalue problem:
\begin{equation}
\mathbf{H}\mathbf{B}=\mathbf{S}\mathbf{B}\hat{\mathbf{\Lambda}}\label{eq:roothan_hall_gen_ev}
\end{equation}

with the individual eigenvalue/eigenvector pairs:
\[
\mathbf{H}\mathbf{b}_{i}=\mathbf{S}\mathbf{b}_{i}\hat{\lambda}_{i}.
\]

\begin{rem}
The Ritz and Roothaan-Hall methods are useful for eigenfunction models
that are expressed in terms linear combinations of basis functions.
Non-linear parameter models can also be handled with nonlinear optimization
methods. In such nonlinear cases it needs to be tested whether there
is a unique optimum or not.
\end{rem}

\subsection{Markov state model\label{sub:Markov-state-model}}

As an example, let $\{S_{1},...,S_{n}\}$ be pairwise disjoint sets
partitioning $\Omega$ and let $\pi_{i}=\int_{S_{i}}d\mathbf{x}\mu(\mathbf{x})$
be the stationary probability of set $S_{i}\subset\Omega$. Consider
the piecewise constant functions
\[
\chi_{i}=\frac{1}{\sqrt{\pi_{i}}}\mathbf{1}_{S_{i}}
\]
where $\mathbf{1}_{S_{i}}$ is the characteristic function that is
1 for $\mathbf{x}\in S_{i}$ and 0 elsewhere. Since $S_{i}\cap S_{j}=\emptyset$
for all $i\neq j$ these functions form a basis set with $\langle\chi_{i},\chi_{j}\rangle_{\mu}=\delta_{ij}$.
Therefore, we can directly use them as a model for the transfer operator
eigenfunctions $r_{k}$. Evaluation of the corresponding $\mathbf{H}$
matrix yields:

\begin{eqnarray}
h_{ij} & = & \frac{1}{\sqrt{\pi_{i}\pi_{j}}}\int_{\mathbf{x}}\int_{\mathbf{y}}d\mathbf{x}\: d\mathbf{y}\:\mathbf{1}_{S_{i}}\: C(\mathbf{x},\mathbf{y};\,\tau)\:\mathbf{1}_{S_{i}}\label{eq:-3}\\
 & = & \frac{1}{\sqrt{\pi_{i}\pi_{j}}}\int_{S_{i}}\int_{S_{j}}d\mathbf{x}\: d\mathbf{y}\: C(\mathbf{x},\mathbf{y};\,\tau)\nonumber \\
 & = & \frac{c_{ij}}{\sqrt{\pi_{i}\pi_{j}}}\nonumber \\
 & = & T_{ij}\sqrt{\frac{\pi_{i}}{\pi_{j}}}\nonumber 
\end{eqnarray}
where $c_{ij}=\mathbb{P}(\mathbf{z}_{t+\tau}\in S_{j},\mathbf{z}_{t}\in S_{i})$
is the joint probability of observing the process in sets $S_{i}$
and $S_{j}$ with a time lag of $\tau$ while $T_{ij}=\mathbb{P}(\mathbf{z}_{t+\tau}\in S_{j}\mid\mathbf{z}_{t}\in S_{i})$
is the corresponding transition probability. Thus, computing the optimal
step-function approximation to the true eigenfunctions $r_{i}=\mu^{-1}l_{i}$
and eigenvalues $\lambda_{i}(\tau)$ via the Ritz method is the same
as computing eigenvalues and eigenvectors of the Markov model transition
matrix $\mathbf{T}=[T_{ij}]$ and scaling them appropriately. This
conclusion can also be obtained from Ref. \cite{SarichNoeSchuette_MMS09_MSMerror}
\emph{via} a different route.

\section{Modeling}

\label{sec_Modeling}Section \ref{sec_Theory} has provided a general
variational principle for approximating the dominant eigenvalues and
eigenfunctions of Markov processes. In order to apply this principle
to complex systems, a useful level of modeling the eigenfunctions
in terms of basis functions needs to be found, and appropriate classes
of basis functions must be identified. This sections attempt a first
approach to this problem by making general considerations for what
modeling schemes might be appropriate.

\subsection{Half-weighted eigenfunctions}

Is it beneficial to directly model the propagator eigenfunctions $l_{k}$,
their weighted counterparts $r_{k}=\mu^{-1}l_{k}$ or rather yet another
set of functions? We would like to use a model that has the following
properties:
\begin{enumerate}
\item As basis functions $\chi_{i}$ it is preferred to use local functions,
i.e. either functions with compact support, or at least with the property
$\lim_{|\mathbf{x}|\rightarrow\infty}\chi_{i}(\mathbf{x})\rightarrow0$.
Such locality is useful to direct the computation effort to specific
regions of state space and may aid the adaptive refinement of the
eigenfunction approximation by specifically adding basis functions
that add local refinements. Since we also aim at modeling eigenfunctions
as linear combinations of basis functions we cannot use the $r_{k}$
eigenfunctions that are not local.
\item We would like to be able to pre-compute as many expressions as possible
analytically. When using appropriate basis functions $\chi_{i}$ and
$\chi_{j}$ it may be possible to calculate analytic solutions of
the integrals $\langle\chi_{i},\chi_{j}\rangle$, albeit this feature
is usually destroyed when weighting with the stationary density as
in $\langle\chi_{i},\chi_{j}\rangle_{\mu^{-1}}$. Therefore we will
also avoid using the eigenfunctions $l_{k}$ that would require such
a weighting.
\end{enumerate}
Consider a rewrite of Eq. (\ref{eq_acf}) as:
\begin{eqnarray}
\mathrm{acf}(r_{k};\,\tau) & = & \int_{\mathbf{x}}\int_{\mathbf{y}}d\mathbf{x}\: d\mathbf{y}\:\mu^{-\frac{1}{2}}(\mathbf{x})\: l_{k}(\mathbf{x})\:\mu^{-\frac{1}{2}}(\mathbf{x})\: C(\mathbf{x},\mathbf{y};\,\tau)\:\mu^{-\frac{1}{2}}(\mathbf{y})\:\mu^{-\frac{1}{2}}(\mathbf{y})\: l_{k}(\mathbf{y})\nonumber \\
 & = & \int_{\mathbf{x}}\int_{\mathbf{y}}d\mathbf{x}\: d\mathbf{y}\:\phi_{k}(\mathbf{x})\: S(\mathbf{x},\mathbf{y};\,\tau)\:\phi_{k}(\mathbf{y})\label{eq_acf-half-weighted}
\end{eqnarray}

where the {}``half-weighted'' eigenfunctions:
\[
\phi_{i}(\mathbf{x})=\frac{l_{i}(\mathbf{x})}{\mu^{\frac{1}{2}}(\mathbf{x})},
\]
and the {}``half-weighted'' correlation density:
\[
S(\mathbf{x},\mathbf{y};\,\tau)=\frac{C(\mathbf{x},\mathbf{y};\,\tau)}{\mu^{\frac{1}{2}}(\mathbf{x})\mu^{\frac{1}{2}}(\mathbf{y})}
\]
have been defined. When now modeling the half-weighted eigenfunctions
$\phi_{i}$ using some basis set, the following nice properties are
obtained:
\begin{enumerate}
\item Local basis functions can be used. This follows from $\lim_{|\mathbf{x}|\rightarrow\infty}\phi_{i}(\mathbf{x})=\lim_{|\mathbf{x}|\rightarrow\infty}\mu^{\frac{1}{2}}(\mathbf{x})r_{i}(\mathbf{x})\rightarrow0$
\item The normalization condition requires a non-weighted scalar product:
\begin{eqnarray}
\langle l_{i},l_{j}\rangle_{\mu^{-1}} & = & \left\langle \frac{l_{i}}{\mu^{1/2}},\frac{l_{j}}{\mu^{1/2}}\right\rangle \label{eq:-4}\\
 & = & \langle\phi_{i},\phi_{j}\rangle=\delta_{ij}.\nonumber 
\end{eqnarray}

\item When $\langle\chi_{i},\chi_{j}\rangle$ is analytically computable
and $\phi_{k}=\sum_{i}c_{i}\chi_{i}$, then $\langle\phi_{k},\phi_{k}\rangle$
is also analytically computable.
\item The first half-weighted eigenfunction has eigenvalue 1 and is identical
to the half-weighted stationary density
\[
\phi_{1}(\mathbf{x})=\frac{l_{1}(\mathbf{x})}{\mu(\mathbf{x})^{1/2}}=\mu(\mathbf{x})^{1/2}.
\]

\item When models of $\mu(\mathbf{x})^{1/2}$ and $\phi_{k}$ are available,
the Rayleigh coefficient in Eq. (\ref{eq_acf-half-weighted}) can
be estimated numerically as the autocorrelation of $\frac{\phi_{i}}{\mu^{1/2}}$
\item When defining a propagator $\mathcal{P}^{*}$ in half-weighted space
via:
\begin{eqnarray}
p_{\tau}(\mathbf{y}) & = & \mathcal{P}(\tau)\circ p_{0}(\mathbf{x})\label{eq:-5}\\
p_{\tau}(\mathbf{y}) & = & \int_{\mathbf{x}}d\mathbf{x}\: p(\mathbf{x})\: p(\mathbf{x},\mathbf{y};\,\tau)\nonumber \\
\frac{p_{\tau}(\mathbf{y})}{\mu(\mathbf{y})^{1/2}} & = & \int_{\mathbf{x}}d\mathbf{x}\:\frac{p(\mathbf{x})}{\mu(\mathbf{x})^{1/2}}\:\frac{\mu(\mathbf{x})^{1/2}}{\mu(\mathbf{y})^{1/2}}\: p(\mathbf{x},\mathbf{y};\,\tau)\nonumber \\
p_{\tau}^{*}(\mathbf{y}) & = & \int_{\mathbf{x}}d\mathbf{x}\: p_{0}^{*}(\mathbf{x})\: p^{*}(\mathbf{x},\mathbf{y};\,\tau)\nonumber \\
 & = & \mathcal{P}^{*}(\tau)\circ p_{0}^{*}(\mathbf{x})\nonumber 
\end{eqnarray}
then $\mathcal{P}^{*}$ is self-adjoint and has orthogonal eigenfunctions
\[
\phi_{i}\lambda_{i}=\mathcal{P}^{*}(\tau)\phi_{i}
\]

\end{enumerate}
It follows from theorem (\ref{thm_eigenvalue-exact}) that the exact
eigenvalues are calculated by the Rayleigh coefficients of the exact
eigenfunctions:
\begin{eqnarray}
\lambda_{i} & = & \left\langle \mu^{-\frac{1}{2}}\phi_{i}\mid\mathcal{C}\mid\mu^{-\frac{1}{2}}\phi_{i}\right\rangle =\mathrm{acf}(\mu^{-\frac{1}{2}}\phi_{i};\,\tau)\label{eq:-6}
\end{eqnarray}
while they are approximated from below by the Rayleigh coefficients
of the approximate eigenfunctions:
\begin{eqnarray}
\lambda_{i}\ge\hat{\lambda}_{i} & = & \left\langle \mu^{-\frac{1}{2}}\hat{\phi}_{i}\mid\mathcal{C}\mid\mu^{-\frac{1}{2}}\hat{\phi}_{i}\right\rangle =\mathrm{acf}(\mu^{-\frac{1}{2}}\hat{\phi}_{i};\,\tau).\label{eq:-7}
\end{eqnarray}
This Rayleigh-coefficient can be directly sampled: for a given trajectory
$\mathbf{z}_{t}$, it can be estimated as:
\begin{eqnarray}
\lambda_{i}\ge\hat{\lambda}_{i} & = & \mathbb{E}_{t}\left[\hat{\mu}(\mathbf{z}_{t})^{-\frac{1}{2}}\hat{\phi}_{i}(\mathbf{z}_{t})\hat{\mu}(\mathbf{z}_{t+\tau})^{-\frac{1}{2}}\hat{\phi}_{i}(\mathbf{z}_{t+\tau})\right]\label{eq:-8}
\end{eqnarray}
Thus, for a given trajectory $\mathbf{z}_{t}$, the optimal eigenfunctions
$\hat{\phi}_{i}$ can be calculated by maximizing the Rayleigh coefficient,
using e.g. the Ritz or the Roothaan-Hall method.

\subsection{Gaussian Basis functions}

\label{sub_Gaussian-Basis}

In complex dynamical processes such as molecular dynamics of biomolecules,
one has to devise basis sets that can be evaluated in high dimensions.
While the present work provides merely a starting point for identifying
appropriate basis sets that go beyond common choices such as the step
function basis or the committor basis, we here suggest a possible
choice that is potentially applicable to the molecular dynamics setting.
Empirically, it has been found that the stationary densities of biomolecules
in the essential subspace is often clustered \cite{Grubmueller_PhysRevE52_2893}.
Therefore, we put forward the idea that $\mu(\mathbf{x})^{1/2}$ and
the other half-weighted eigenfunctions can be approximated by a Gaussian
mixture. 

Let us thus assume that the state space $\Omega$ is a metric space
with distance $d(\mathbf{x},\mathbf{y})$, and let us model the half-weighted
invariant density $\hat{\mu}(\mathbf{x})^{1/2}$ in terms of Gaussian
basis functions:

\[
\hat{\mu}(\mathbf{x})^{1/2}=\sum_{i}a_{i}\exp\left(-\frac{d(\mathbf{x},\mathbf{y}_{i})}{2\sigma^{2}}\right)
\]
where $a_{i}\in\mathbb{R}$, $\mathbf{y}_{i}\in\Omega$ and $\sigma\in\mathbb{R}$
are amplitude, mean and shape parameters. The invariant density can
then be analytically given:
\begin{eqnarray}
\hat{\mu}(x)=\left(\hat{\mu}(x)^{1/2}\right)^{2} & = & \left(\sum_{i}a_{i}\exp\left(-\frac{d(\mathbf{x},\mathbf{y}_{i})}{2\sigma^{2}}\right)\right)^{2}\label{eq:-9}\\
 & = & \sum_{i}a_{i}^{2}\exp\left(-\frac{d(\mathbf{x},\mathbf{y}_{i})}{\sigma^{2}}\right)+\sum_{i<j}2a_{i}a_{j}\exp\left(-\frac{d(\mathbf{x},\mathbf{y}_{i})+d(\mathbf{x},\mathbf{y}_{j})}{2\sigma^{2}}\right).\nonumber 
\end{eqnarray}

Furthermore, consider that the half-weighted eigenfunctions be given
in terms of the same Gaussian basis:
\[
\hat{\phi}_{k}(\mathbf{x})=\sum_{i}b_{ki}\exp\left(-\frac{d(\mathbf{x},\mathbf{y}_{i})}{2\sigma^{2}}\right)
\]

where $b_{ki}$ must be appropriately chosen to guarantee orthogonality
with respect to the invariant density. The corresponding unweighted
eigenfunctions $\hat{r}_{k}$ are
\[
\hat{r}_{k}(x)=\frac{\hat{\phi}_{k}(x)}{\hat{\phi}_{1}(x)}=\frac{\sum_{i}b_{ki}\exp\left(-\frac{d(\mathbf{x},\mathbf{y}_{i})}{2\sigma^{2}}\right)}{\sum_{i}a_{i}\exp\left(-\frac{d(\mathbf{x},\mathbf{y}_{i})}{2\sigma^{2}}\right)},
\]

which does not have a simple form, but can be evaluated point-wise.
In order to enforce the normalization $\langle\hat{\phi}_{k},\hat{\phi}_{l}\rangle=\delta_{kl}$
we consider
\begin{eqnarray}
\langle\hat{\phi}_{k},\hat{\phi}_{l}\rangle & = & \int d\mathbf{x}\sum_{i}\sum_{j}b_{ki}b_{lj}\exp\left(-\frac{d(\mathbf{x},\mathbf{y}_{i})+d(\mathbf{x},\mathbf{y}_{j})}{2\sigma^{2}}\right),\label{eq:-10}
\end{eqnarray}

which can be analytically evaluated when $\Omega$ is an Euclidean
space. 

Using Gaussian Ansatz functions in half-weighted space may thus have
important practical benefits. However, other basis sets, especially
sets of other Radial basis functions than Gaussians may also be a
good choice for high-dimensional systems that deserve further investigation.

\section{Numerical examples}

\label{sec_Example-1}

\subsection{Metastable potential from a Gaussian stationary density}

The example is chosen such that it is tractable by direct grid discretization
so as to be able to generate a reference solution. Different optimization
methods for the variational problem and choices of basis sets are
considered and illustrated.

Let $\Omega=\mathbb{R}$ be our state space with points $x\in\Omega$.
First, a {}``Gaussian hat'' function is defined \emph{via}:
\[
gh(x;a,s):=\exp\left(-\frac{(x-a)^{2}}{2s^{2}}\right)
\]

where $a\in\mathbb{R}$ is the mean and $s\in\mathbb{R}$ the standard
deviation. We define a stationary density from two Gaussians:
\begin{eqnarray*}
\mu(x) & := & \frac{1}{2\sqrt{\pi}}\left(gh(x,-2,1)+gh(x,2,1)\right)
\end{eqnarray*}

The corresponding dimensionless generating potential is given by
\[
U(x)=-\ln\left(\mu(x)\right)
\]

which exerts a force on a particle at position $x$ of:
\[
f(x)=-\nabla U(x)=\frac{1}{\mu(x)}\frac{d}{dx}\mu(x).
\]

Using Smoluchowski dynamics and Euler discretization, a time-step
$x\overset{\tau}{\rightarrow}y$ is given by
\begin{equation}
y=x+\tau f(x)+\sqrt{2\tau}\eta\label{eq:smoluchowski_time_step}
\end{equation}

where $\eta$ is a normally distributed random variable (white noise).
The transition density can hence be written as:
\[
p(x,y;\tau)=\mathcal{N}_{y}(x+\tau f(x),\sqrt{2\tau})
\]

and the correlation density is given by:
\[
C(x,y;\tau)=\mu(x)\: p(x,y;\tau)
\]

Now we are concerned about estimation of eigenfunctions. The first
half-weighted eigenfunction is the square root of the stationary density:
\[
\phi_{1}=\sqrt{\mu(x)}
\]

such that $\phi_{1}^{2}(x)=\mu(x)$. We here assume $\mu(x)$ to be
known, although in practice it must be estimated.

\begin{figure}
\begin{centering}
(a)\includegraphics[width=0.3\columnwidth]{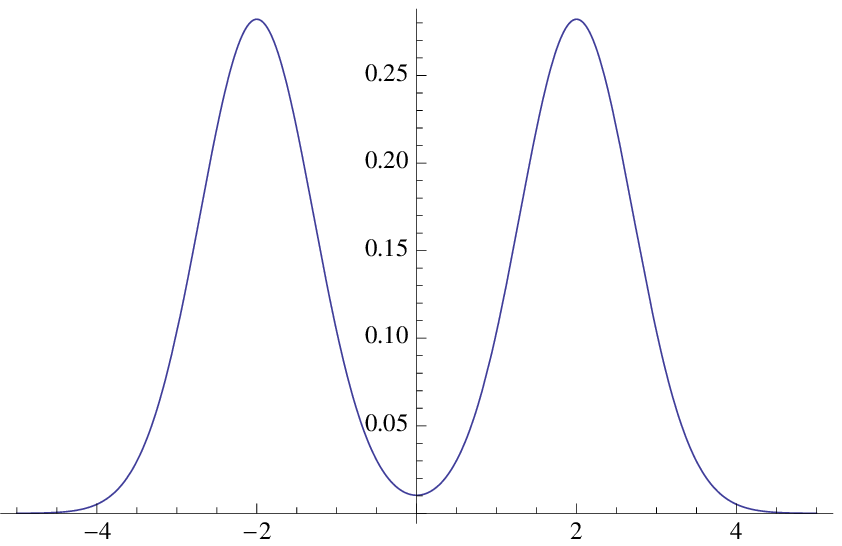} (b)\includegraphics[width=0.3\columnwidth]{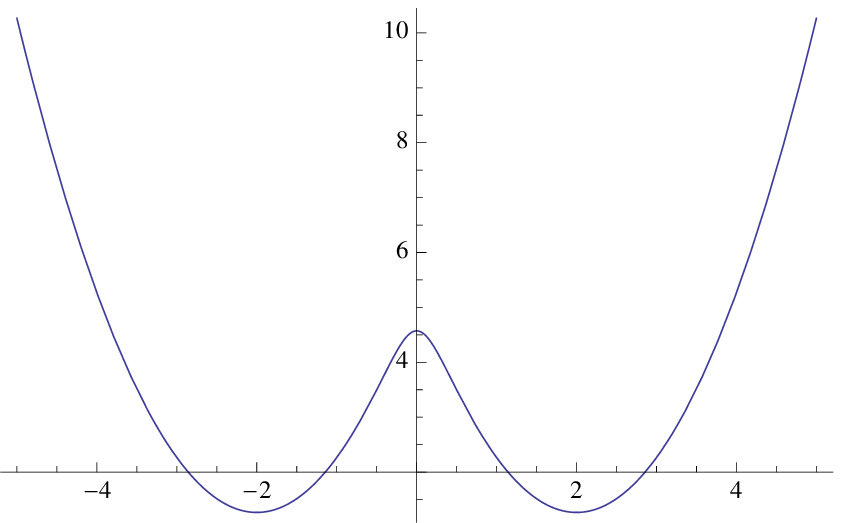}
\par\end{centering}

\begin{centering}
(c)\includegraphics[width=0.3\columnwidth]{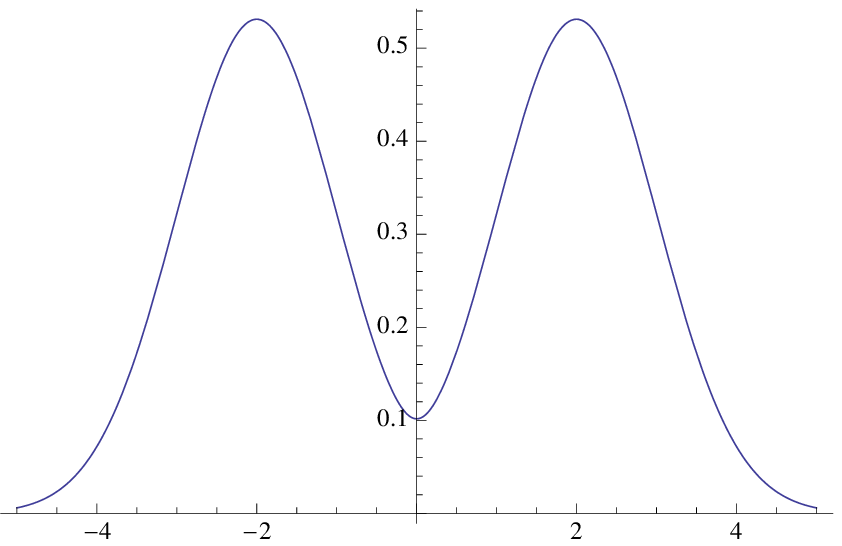} (d)\includegraphics[width=0.3\columnwidth]{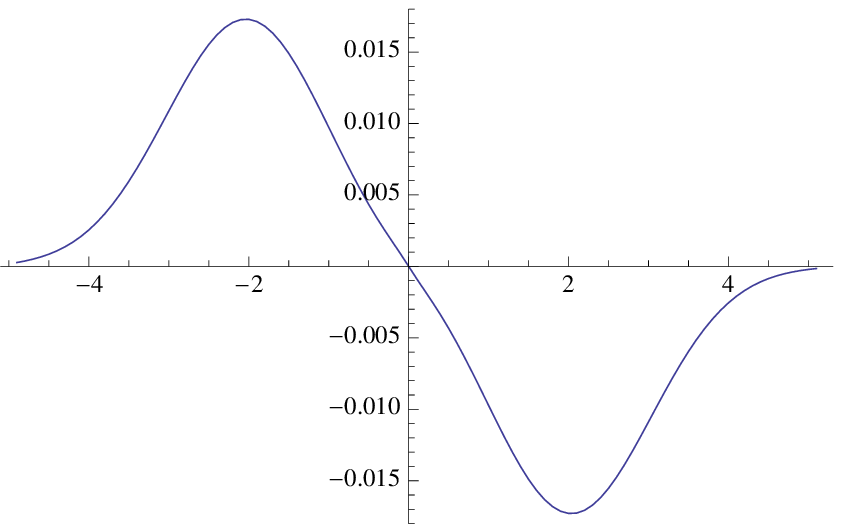} 
\par\end{centering}

\caption{\label{fig:2-well-1}Two-well potential with Smoluchowski dynamics,
$\tau=0.025$. (a) Double-Gaussian density $\mu(x)$, (b) the corresponding
potential $U(x)$, (c) the half-weighted density $\phi_{1}(x)=\sqrt{\mu(x)}$,
(d) slowest-process eigenfunction $\phi_{2}(x)$ from direct numerical
solution.}
\end{figure}

\subsection{Ritz method with characteristic functions (Markov state model)}

We aim at approximating the eigenvalues and eigenfunctions of the
true propagator via the Ritz method described in Sec. \ref{sub_Ritz-method}.
Here, a basis set $\boldsymbol{\chi}=(\chi_{1}(x),...,\chi_{N}(x))^{T}$
consisting of $N=$20 characteristic functions in the range $x\in[-6,6]$
defined by:
\[
\chi_{i}=\frac{1}{\sqrt{\pi_{i}}}\mathbf{1}_{[-6+0.6i,\:-5.4+0.6i]}
\]

where $\mathbf{1}_{[a,b]}$ is the characteristic function that is
1 on the interval $[a,b]$ and 0 outside, and $\pi_{i}=\int_{s_{i}}\mu(x)dx$
is the stationary probability of the set $S_{i}=[-6+0.6i,\:-5.4+0.6i]$.
The corresponding density-matrix $\mathbf{H}=[h_{ij}]\in\mathbb{R}^{N\times N}$
defined by (Dirac notation):
\[
h_{ij}=\langle\chi_{i}\mid\mathcal{C}\mid\chi_{j}\rangle
\]

takes the form
\[
h_{ij}=\frac{\sqrt{\pi_{i}}\int_{S_{i}}dx\:\int_{S_{j}}dy\: C(x,y;\:\tau)}{\sqrt{\pi_{j}}}
\]
as in Sec. \ref{sub:Markov-state-model}.

The $\mathbf{H}$ matrix was calculated by direct numerical integration
using Mathematica and the eigenvalue problem was subsequently solved,
yielding the optimal coefficient vectors $\mathbf{c}_{1}$ and $\mathbf{c}_{2}$
that provide the approximations $\hat{\phi}_{1}\approx\phi_{1}$,
$\hat{\phi}_{2}\approx\phi_{2}$ and $\hat{\lambda}_{1}\approx\lambda_{2}$,
$\hat{\lambda}_{2}\approx\lambda_{2}$. The results are given in Fig.
\ref{fig:2-well-Ritz-1} a and b, indicating that the eigenvalues
are approximated to two significant digits while the eigenfunctions
retain a significant discretization error that arises from the step-function
basis used.

\begin{figure}

\begin{centering}
a)\includegraphics[width=0.3\columnwidth]{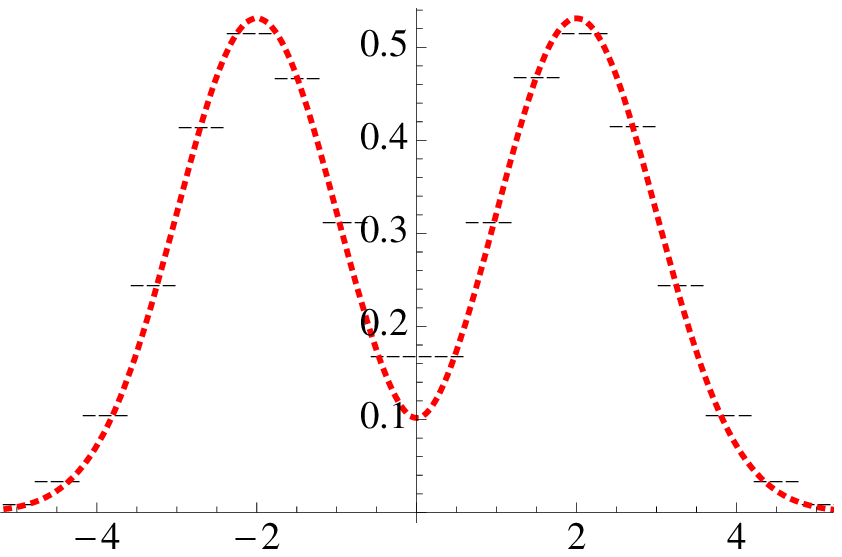} b)\includegraphics[width=0.3\columnwidth]{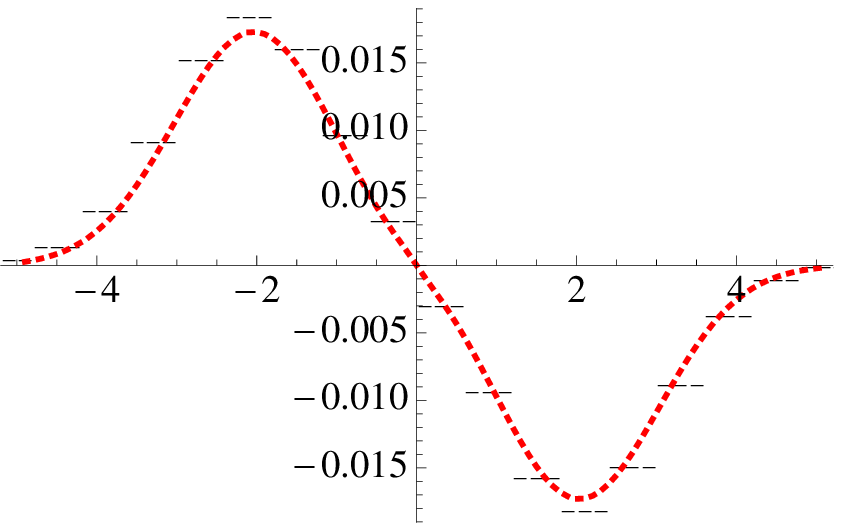}
\par\end{centering}

\begin{centering}
c)\includegraphics[width=0.3\columnwidth]{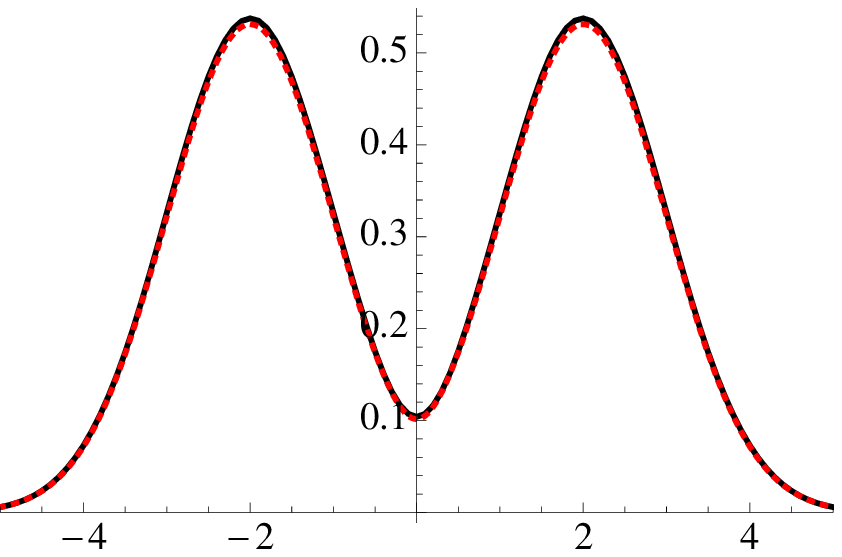} d)\includegraphics[width=0.3\columnwidth]{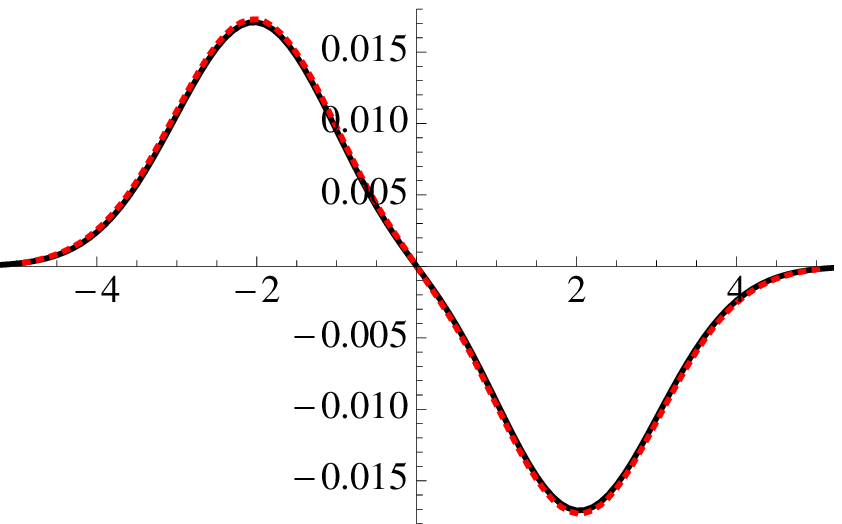}
\par\end{centering}

\begin{centering}
e)\includegraphics[width=0.3\columnwidth]{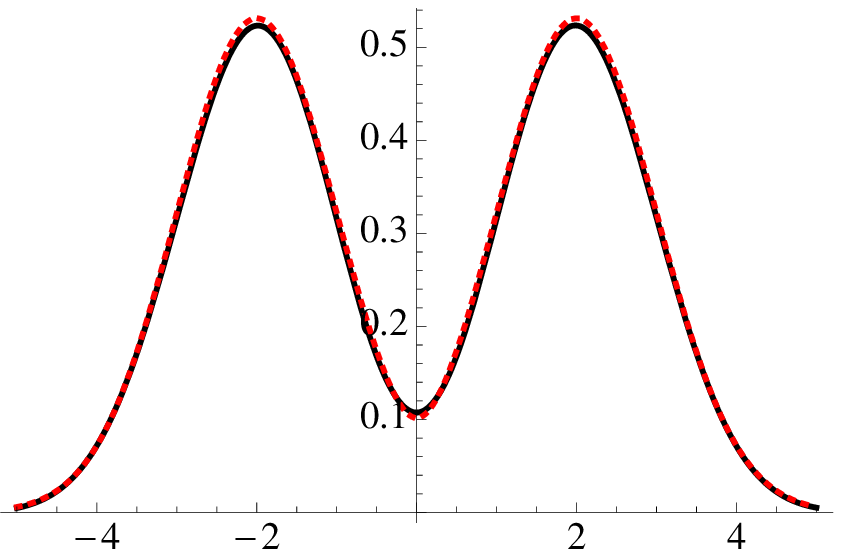} f)\includegraphics[width=0.3\columnwidth]{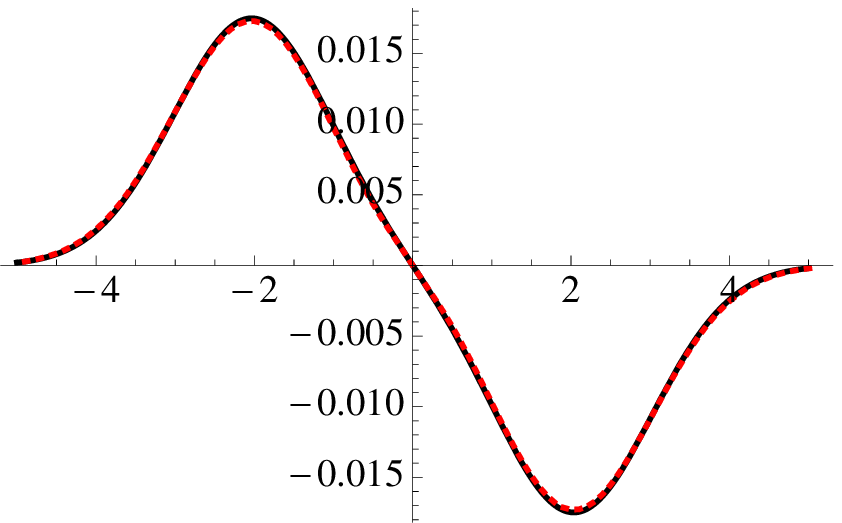}
\par\end{centering}

\caption{\label{fig:2-well-Ritz-1}Approximation to the eigenfunction shown
in Fig. \ref{fig:2-well-1} using different methods and basis sets.
The true eigenvalues are 1.0, 0.998913. The reference solutions are
shown in red dotted lines while the approximations are shown in black
solid lines. a,b) MSM / Ritz method with 20 characteristic functions
in the range $x\in[-6,6]$. Eigenvalues 1.0, 0.980384. c,d) Ritz
method with a basis set of 20 Hermite functions, Eigenvalues 1.0,
0.998913. e,f) Roothaan-Hall method with a basis set of 11 Gaussians
Eigenvalues 1.0, 0.995507.}
\end{figure}

\subsection{Ritz method with a Hermite basis}

In order to arrive at a smooth solution we employ the Ritz method
with a smooth orthogonal function basis. Here, we choose the Hermite
functions, defined by:
\[
\psi_{i}(x)=(-1)^{i}(2^{i}i!\sqrt{\pi})^{-1/2}\mathrm{e}^{x^{2}/2}\frac{d^{i}}{dx^{i}}\mathrm{e}^{-x^{2}}.
\]
The Hermite functions are local ($\lim_{|\mathbf{x}|\rightarrow\infty}\chi_{i}(\mathbf{x})\rightarrow0$)
and are thus useful to model the behavior of the eigenfunctions $\phi_{k}$
where the stationary density is significantly larger than zero. The
basis functions are defined to be the normalized Hermite functions
with
\[
\chi_{i}=\frac{\psi_{i}}{\sqrt{\langle\psi_{i},\psi_{i}\rangle}}.
\]
such that $\langle\chi_{i},\chi_{j}\rangle=\delta_{ij}$. 

Using the basis set $\boldsymbol{\chi}=[\chi_{0},...,\chi_{19}]^{T}$,
the $\mathbf{H}$ matrix was calculated by direct numerical integration
using Mathematica and the Ritz method was used to approximate eigenvalues
and eigenfunctions of the propagator. As shown in Fig. \ref{fig:2-well-Ritz-1}c
and d, a nearly perfect approximation of both eigenvalues and eigenfunctions
is obtained even though the number of basis functions used is identical
to the MSM approach of the previous section. However, the MSM approach
has the advantage that it can be employed in high-dimensional spaces
which is not the case with Hermite basis functions.

\subsection{Roothaan-Hall method with a Gaussian basis}

In order to have a hope to solve high-dimensional problems, one must
resort to simple basis functions, ideally ones with analytical properties
that can be practically evaluated in high-dimensional spaces. Therefore,
we here suggest the use of Gaussian basis functions as described in
Sec. \ref{sub_Gaussian-Basis}. In the one-dimensional case, the Gaussians
used are:
\[
gh_{i}(x)=\exp\left(-\frac{(x-y_{i})^{2}}{2\sigma^{2}}\right).
\]
Here, we use $\sigma=1$ and $y_{i}=(-5,-4,...,4,5)$. Gaussian basis
functions are not orthogonal. We therefore calculate the overlap matrix
$\mathbf{S}=[s_{ij}]$ with
\begin{eqnarray}
s_{ij} & = & \langle gh_{i},gh_{j}\rangle\label{eq:-11-1}\\
 & = & \int_{-\infty}^{\infty}dx\:\exp\left(-\frac{(x-y_{i})^{2}+(x-y_{j})^{2}}{2\sigma^{2}}\right)\nonumber 
\end{eqnarray}
that can be evaluated analytically. The $\mathbf{H}=[h_{ab}]$ matrix
is again defined by (Dirac notation):
\[
h_{ab}=\langle gh_{a}\mid\mathcal{C}\mid gh_{b}\rangle.
\]

Using the Roothaan-Hall method (Sec. \ref{sub_Rothaan-Hall-method-1}),
the best approximation to the propagator eigenfunctions $\hat{\phi}_{i}=\langle\mathbf{b}_{i},\boldsymbol{\chi}\rangle$
are found by the eigenvectors of the generalized eigenvalue problem
\[
\mathbf{S}^{-1}\mathbf{H}\mathbf{b}_{i}=\hat{\lambda}_{i}\mathbf{b}_{i}.
\]
As shown in Fig. \ref{fig:2-well-Ritz-1} c and d, an also nearly
perfect approximation of both eigenvalues and eigenfunctions is achieved
even though the basis is smaller than the previous basis sets. Since
the Gaussian basis set is a good candidate for being used in high-dimensional
spaces, this is probably the most useful result so far. Note that
the matrix inversion $\mathbf{S}^{-1}$ can be efficiently calculated
with sparse matrix methods when a cutoff is used to set nearly-non-overlapping
pairs with $h_{ab}\approx0$ to 0.

\subsection{Nonlinear optimization}

The previous methods used exclusively linear combinations of basis
functions. A greater degree of freedom in approximating the propagator
eigenfunctions is achieved by using additional shape parameters in
the basis functions. This, however, leads to a nonlinear optimization
problem that is in general difficult to solve and may have multiple
optima. However, we briefly illustrate the approach on our one-dimensional
example. We make the Ansatz for the second half-weighted eigenfunction:
\begin{eqnarray}
\hat{\phi}_{2}(x) & = & \frac{1}{Z}\left(-gh(x,y_{2},s_{2})+gh(x,y_{2},s_{2})\right)\label{eq:-12-1}\\
Z & = & \left[\int_{-\infty}^{\infty}dx\:[-gh(x,y_{2},s_{2})+gh(x,y_{2},s_{2})]\right]^{1/2}\nonumber 
\end{eqnarray}

The normalization constant makes sure that $\langle\hat{\phi}_{2},\hat{\phi}_{2}\rangle=1$.
The constraint $\langle\hat{\phi}_{2},\phi_{1}\rangle=0$ is here
ensured by the fact that $\phi_{1}$ is an even function and $\hat{\phi}_{2}$
is an odd function.

The optimal parameters $\hat{y}_{2}$ and $\hat{s}_{2}$ are found
by maximizing the Rayleigh coefficient:
\[
(\hat{y}_{2},\hat{s}_{2})=\arg\max_{y_{2},s_{2}}\left\langle \frac{\hat{\phi}_{2}(y_{2},s_{2})}{\mu^{1/2}}\mid\mathcal{C}\mid\frac{\hat{\phi}_{2}(y_{2},s_{2})}{\mu^{1/2}}\right\rangle .
\]

Fig. \ref{fig_Nonlinear-optimization-1} shows the results of varying
$y_{2}$ and $s_{2}$ as well as the local optimum for $y_{2}=1$
and $s_{2}=0.8$. In this case, a good approximation to the eigenvector
could be achieved with a 2-term ansatz function. However, the general
usefulness of the nonlinear approach for high-dimensional problems
remains to be evaluated.

\begin{figure}
\begin{centering}
(a)\includegraphics[width=0.3\columnwidth]{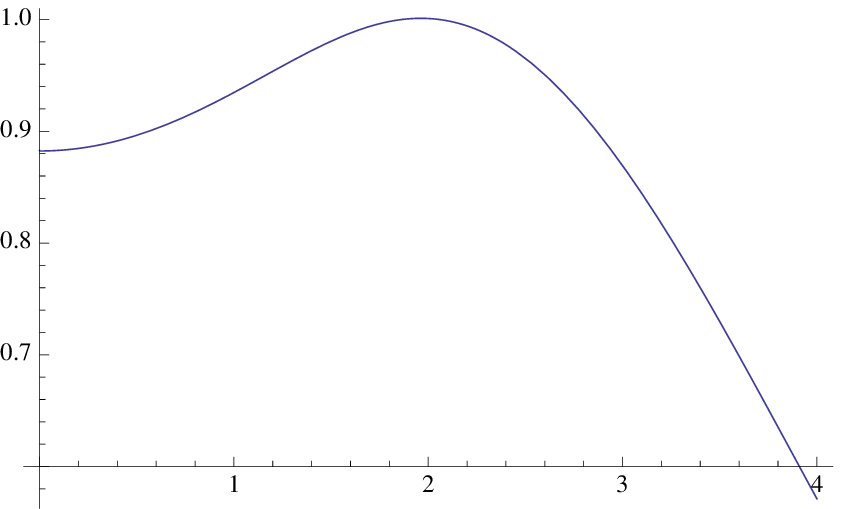} (b)\includegraphics[width=0.3\columnwidth]{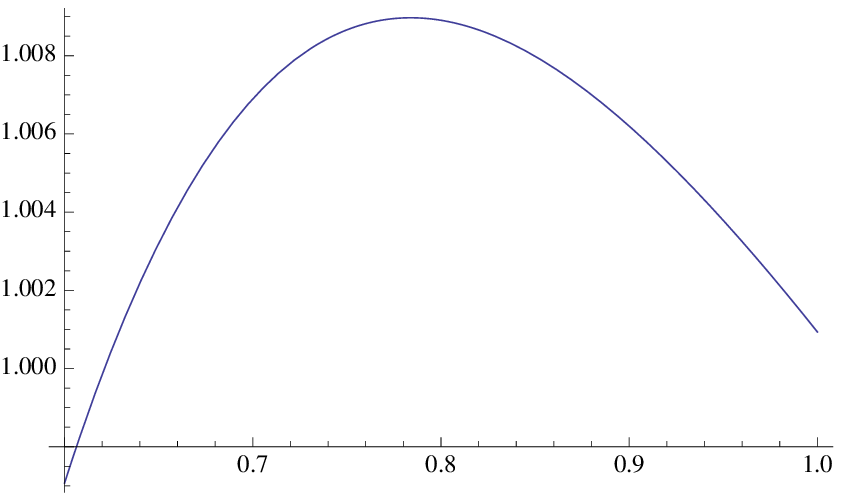}
(c)\includegraphics[width=0.3\columnwidth]{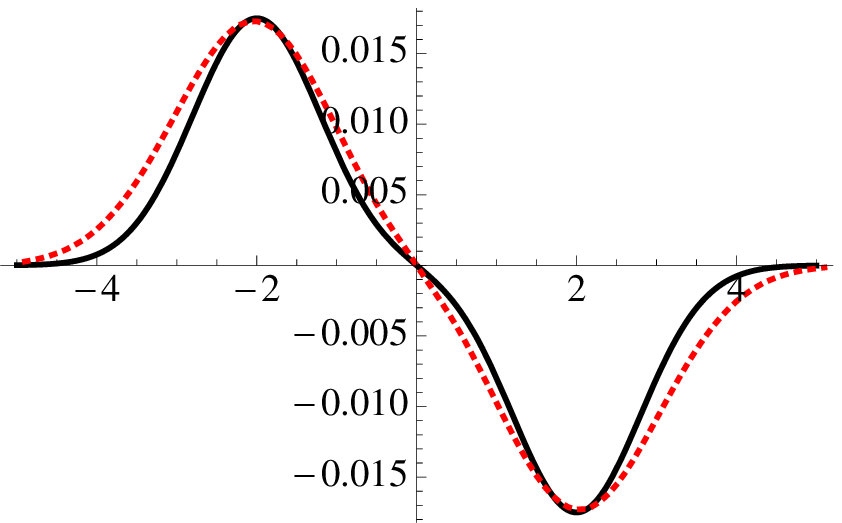} 
\par\end{centering}

\caption{\label{fig_Nonlinear-optimization-1}Nonlinear optimization of the
propagator eigenfunctions. (a) value of $\hat{\lambda}_{2}$ depending
on $y_{2}$ and with fixed $s_{2}=0.8$. (b) value of $\hat{\lambda}_{2}$
depending on $s_{2}$ with fixed $y_{2}=1$, (c) Approximation to
$\phi_{2}$ with $y_{2}=1$, $s_{2}=0.8$ (black), Reference solution
for $\phi_{2}$ (red).}
\end{figure}

\subsection{Quartic potential}

As a second numerical example, we use the diffusion in a one-dimensional
quartic potential: 

\[
V(x)=3x^{4}-6x^{2}+3,
\]
which has two minimum positions at $x=\pm1$. We seek to estimate
the second dominant eigenvalue $\lambda_{2}(\tau)$ and the corresponding
time scale $t_{2}=-\frac{\tau}{\log\lambda_{2}(\tau)}$ by applying
the Roothaan-Hall method above with Gaussian functions. We will then
compare the results to those obtained from a Markov state model discretization.
First, we generate a sample trajectory of the process as in Eq. (\ref{eq:smoluchowski_time_step}).
Here, we used a time step $\tau=10^{-3}$ and a total number of steps
$N=10^{7}$. From this sample, we computed an estimate $\hat{\mu}$
of the stationary density. We then computed the Markov state model
transition matrix and its eigenvalues, using a discretization of the
state space into $100$ sets.

For the application of the Roothaan-Hall method, we picked thirteen
Gaussian functions $\hat{\phi}_{i}$ with centres at

\[
x=-2,-1.5,-1.2,-1,-0.8,-0.5,0,0.5,0.8,1,1.2,1.5,2.
\]

The variances were set to $1$ for the functions centred at $x=-2,-1.5,0,1.5,2$,
and to $0.5$ for all others. Those functions were used as half-weighted
basis functions, meaning that we computed the entries of the $\mathbf{H}$-matrix
according to

\[
h_{ij}=\frac{1}{N-m}\sum_{k=1}^{N-m}\hat{\mu}^{-1/2}(x_{k})\hat{\phi}_{i}(x_{k})\hat{\mu}^{-1/2}(x_{k+m})\hat{\phi}_{j}(x_{k+m}),
\]
where $m$ is an integer corresponding to the lag time $m\tau$. We
similarly estimated the $\mathbf{S}$-matrix and then solved the generalized
eigenvalue problem Eq. (\ref{eq:roothan_hall_gen_ev}).

\begin{figure}
a)\includegraphics[width=0.4\textwidth,height=0.2\textheight]{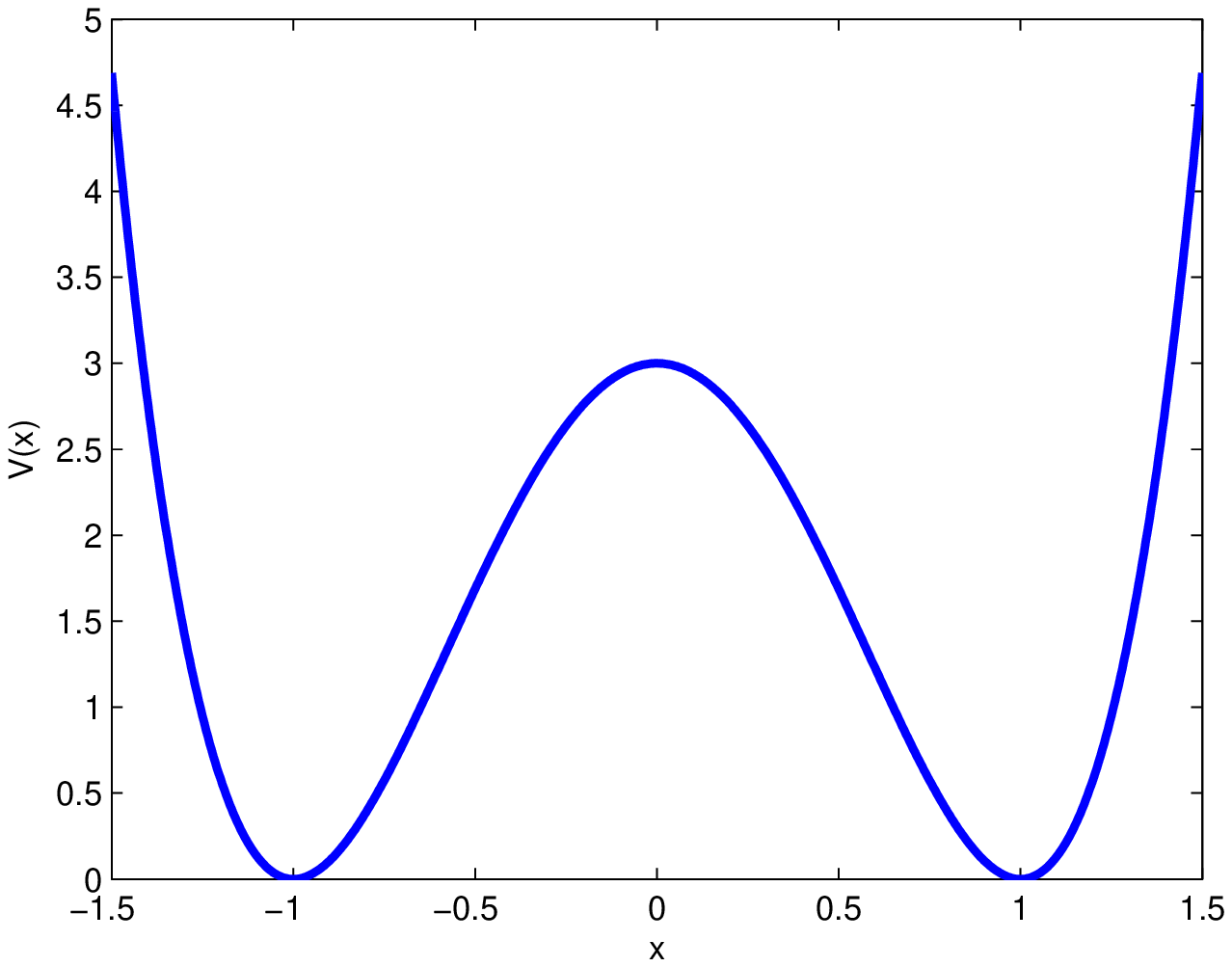}b)\includegraphics[width=0.4\textwidth,height=0.2\textheight]{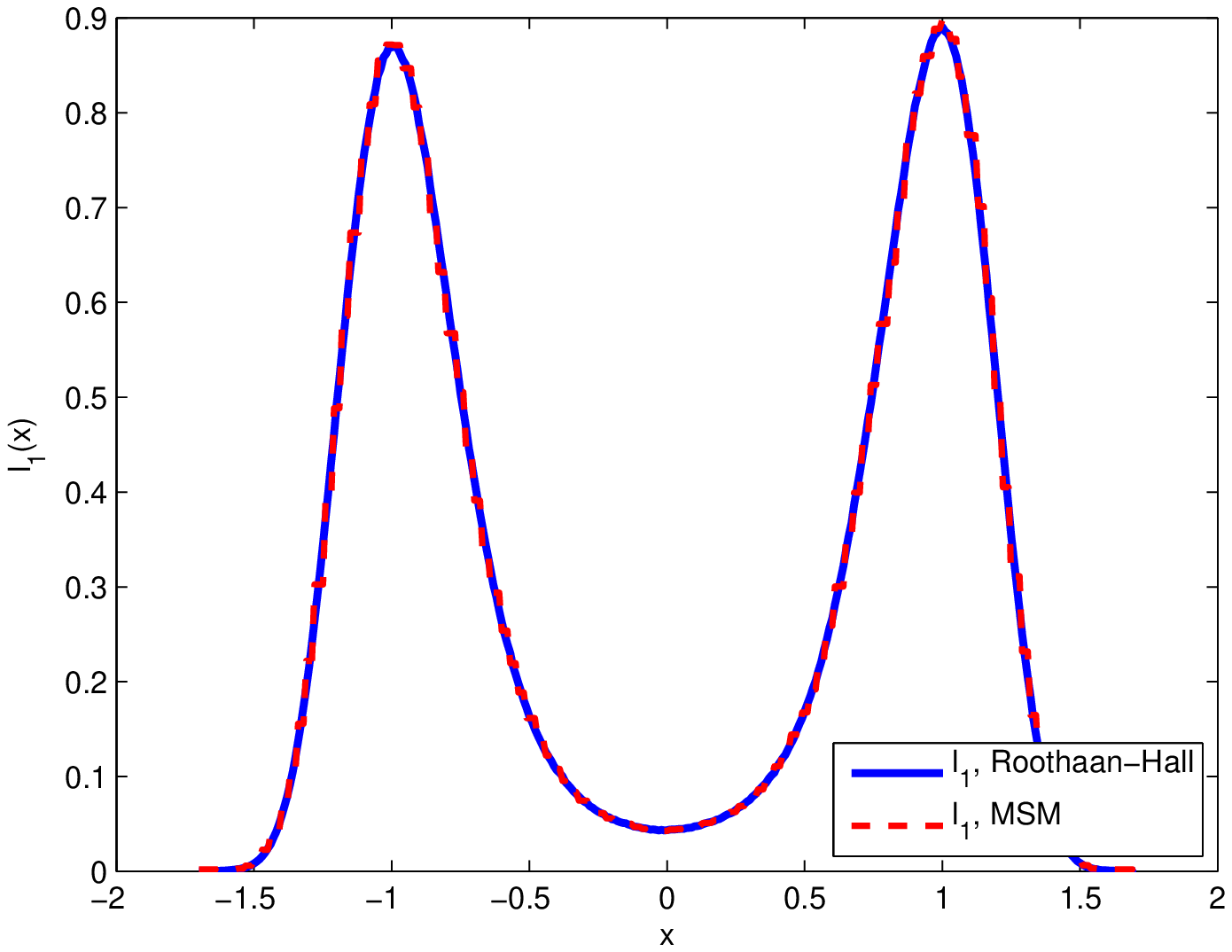}

c)\includegraphics[width=0.4\textwidth,height=0.2\textheight]{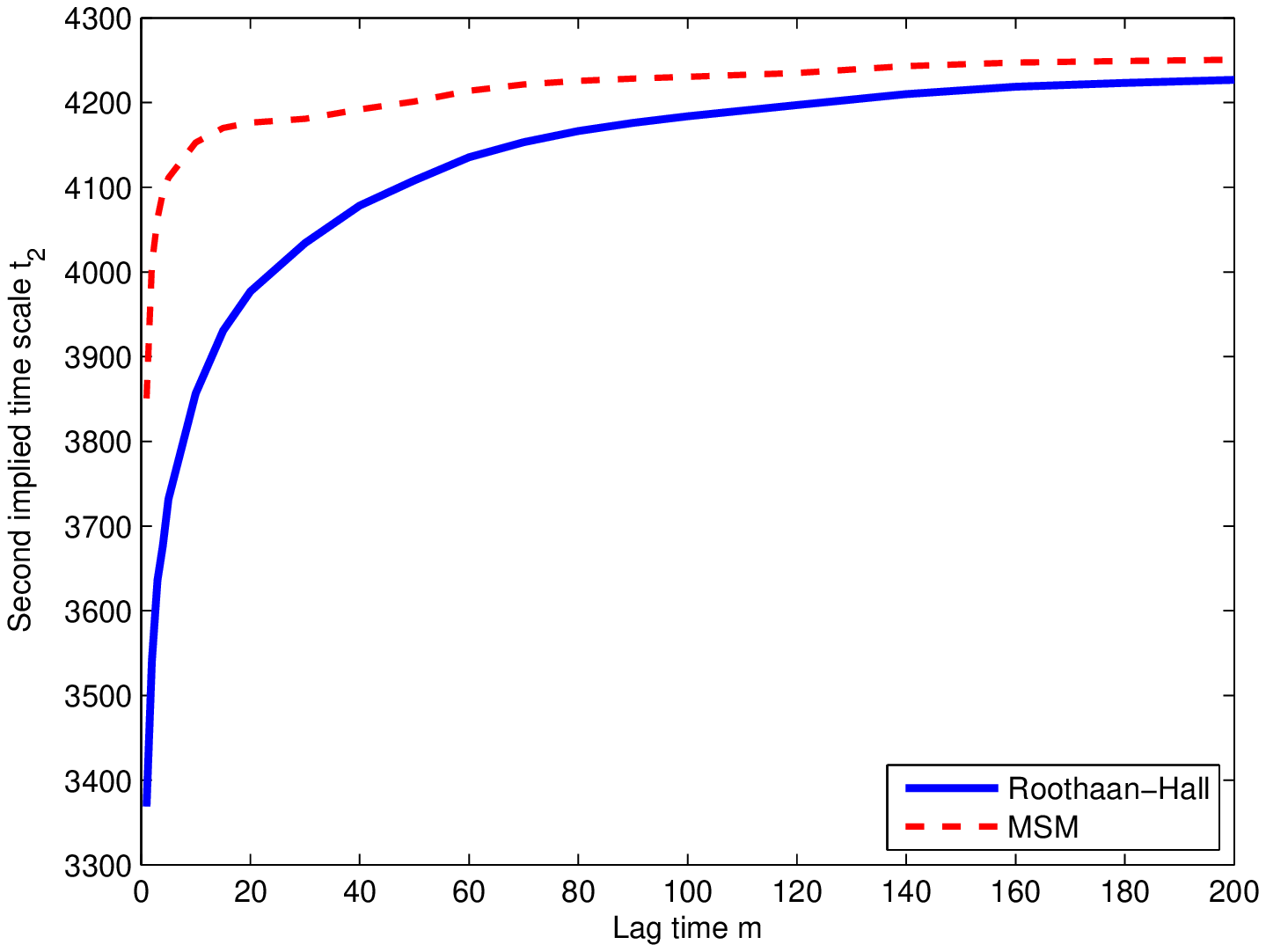}d)\includegraphics[width=0.4\textwidth,height=0.2\textheight]{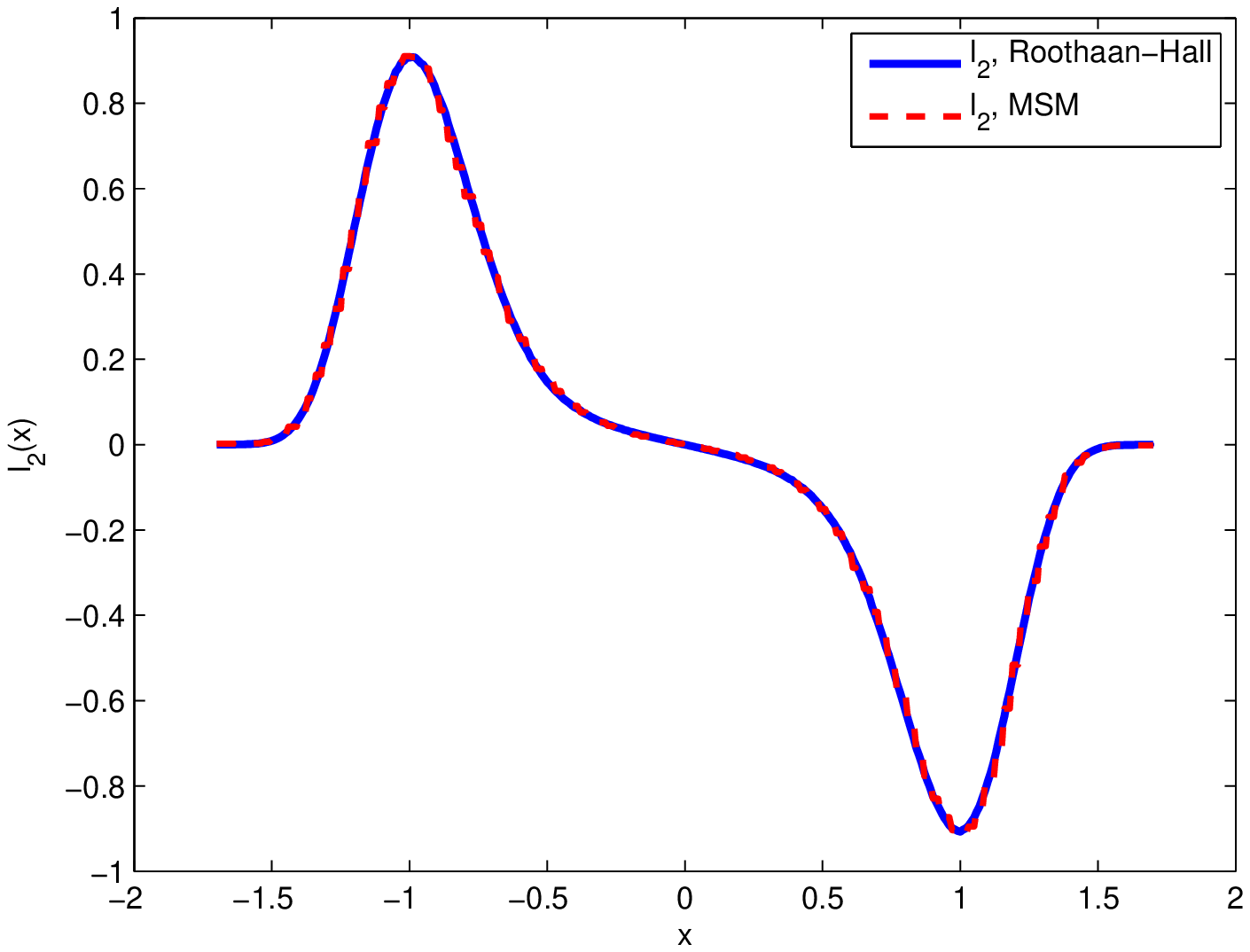}

\caption{Application of the Roothaan-Hall method with Gaussian basis functions
to a one-dimensional diffusion process, compared to a $100$-set MSM
discretization computed with the EMMA package \cite{SenneSchuetteNoe_JCTC12_EMMA1.2}.
a) The potential function $V$. b) Estimated stationary density $\hat{l}_{1}$
compared to the exact solution. c) Comparison of the second largest
eigenvalue $\lambda_{2}(\tau)$, estimated by the Roothaan-Hall method
and an MSM, both plotted against the lag time in integer multiples
of $\Delta t$.implied time scale $t_{2}$, with the lag time given
in integer multiples $m$ of the simulation step $\tau$. d) The second
eigenfunction $\hat{l}_{2}$ as estimated from both methods.\label{fig:Roothan-Hall_1D}}
\end{figure}

The results displayed in Figure \ref{fig:Roothan-Hall_1D} show that
we not only get a good approximation of both the first and second
eigenfunction in terms of smooth functions, but most importantly,
the second largest eigenvalue $\lambda_{2}(\tau)$ and the corresponding
time scale $t_{2}$ can be estimated comparably well with both methods.
While $100$ sets were used for the MSM discretization, only thirteen
basis function were used for the Roothaan-Hall method.

\section{Conclusions and outlook}

\label{sec_Conclusion}

Here, we have formulated a variational principle for Markov processes
where the dominant eigenfunctions are approximated by maximizing a
Rayleigh coefficient, which - in the limit of the exact eigenfunctions
- is identical to the true eigenvalues. This is the formulation needed
to attack the problem of estimating the slow processes in stochastic
dynamical systems with a much wider methodology than by the presently
used class of Markov State Models. In particular, the entire toolbox
of quantum mechanics where many decades of research have gone into
the development of eigenfunction approximation methods for high-dimensional
systems becomes available.

From a practical point of view, a main achievement of the present
study is that the Rayleigh coefficient can be estimated from simulation
data as it is equivalent to an autocorrelation function of the appropriately
weighted test function. The autocorrelation estimates are such that
they can be fed by many short simulations distributed across state
space and do not require the direct simulation of the slow processes
in a single long trajectory. This is an important advantage in dealing
with the sampling problem that arises in simulating metastable dynamical
systems.

A main use of the present approach will be to facilitate the development
of adaptive discretization algorithms of high-dimensional state spaces
for the computational characterization of complex dynamical processes.
The Rayleigh coefficient derived here represents a practically accessible
and theoretically solid functional to guide such an adaptive discretization
algorithm. In contrast to Markov state models, such an approximation
approach does not necessarily need to use the same basis set for all
eigenfunctions. Especially for reversible dynamics, different eigenfunctions
can be approximated separately, thus possibly permitting the use of
relatively small basis sets.

For a given class of dynamical systems, a basis sets must be selected
that is appropriate to model the regularity of the solution. For high-dimensional
processes such as molecular dynamics, Gaussian basis functions might
be a workable solution since they be well combined with clustering-based
identification of center positions and permit the analytical calculation
of some quantities such as the overlap integral. An interesting alternative
approach is to build the Basis set upon weakly coupled subsets of
internal molecular coordinates, as suggested in the mean field approach
developed in Ref. \cite{Friesecke_MMS09_MeanField}. The usefulness
of these and other approaches for complex molecular systems will be
investigated in future studies. Furthermore, subsequent studies will
deal with the error caused by the projection on a finite-dimensional
subspace depending on the choice of basis functions, as well as with
statistical considerations, such as the efficient evaluation of uncertainties
of the estimated Rayleigh coefficients.

\section*{Acknowledgements}

We gratefully acknowledge the DFG research center \textsc{Matheon}
and the Berlin Mathematical School (BMS) for funding. We would also
like to thank a number of colleagues for stimulating discussions,
in particular: Christof Schütte, Guillermo Perez-Hernandez, Bettina
Keller, Jan-Hendrik Prinz, Benjamin Trendelkamp-Schroer (all FU Berlin),
and John D. Chodera (UC Berkeley).

\bibliographystyle{plain}
\bibliography{qm-approach,all,/Users/noe/data/my_papers/bib/own}

\end{document}